\theoremstyle{definition}
\newtheorem{theorem}{Theorem}
\newtheorem{lemma}{Lemma}
\newcommand\numberthis{\addtocounter{equation}{1}\tag{\theequation}}
    \newcommand{\bfit}[1]{\textcolor{black}{\textit{\textbf{#1}}}}
    \NewDocumentCommand\explainequals{m}{\overset{\textit{#1}}{=}}
    \newcommand{\Z}{\mathbb{Z}}	
    \newcommand{\C}{\mathbb{C}} 
        \newcommand{\Q}{\mathbb{Q}} 
    \newcommand{\indicator}{\mathbbm{1} }
    \DeclareMathAlphabet{\mathsfit}{T1}{\sfdefault}{\mddefault}{\sldefault}
    \SetMathAlphabet{\mathsfit}{bold}{T1}{\sfdefault}{\bfdefault}{\sldefault}
    \newcommand{\X}{\mathsf{X}}
    \newcommand{\Y}{\mathsf{Y}}
    \renewcommand{\Z}{\mathsf{Z}}
        \renewcommand{\S}{\mathsf{S}}
    \renewcommand{\H}{\mathsf{H}}
        \newcommand{\F}{\mathsf{F}}
    \newcommand{\Ph}{\mathsf{Ph}}
    \renewcommand{\P}{\mathsf{P}}
    \newcommand{\G}{\mathsf{G}}
    \newcommand{\logzero}{\ket{\overline{0}}}
    \newcommand{\logone}{\ket{\overline{1}}}
     \newcommand{\logicalket}[1]{\ket*{\overline{#1}}}
    \newcommand{\logicalbra}[1]{\bra*{\overline{#1}}}
    \newcommand{\SL}{\mathrm{SL}}
    \newcommand{\SU}{\mathrm{SU}}
    \newcommand{\U}{\mathrm{U}}
    \newcommand{\SO}{\mathrm{SO}}
    \newcommand{\D}{\mathscr{D}}
        \newcommand{\tet}{2\mathrm{T}}
        \newcommand{\oct}{2\mathrm{O}}
        \newcommand{\ico}{2\mathrm{I}}
    \newcommand{\smallminus}{\text{-}}
\begin{document}

\title{A Family of Quantum Codes with Exotic Transversal Gates}

\thanks{ These authors contributed equally to this work.}

\author{Eric Kubischta}

\email{erickub@umd.edu} 

\author{Ian Teixeira}

\email{igt@umd.edu}

\affiliation{Joint Center for Quantum Information and Computer Science,
NIST/University of Maryland, College Park, Maryland 20742 USA}

\begin{abstract}
    Recently an algorithm has been constructed that shows the binary icosahedral group $\ico$ together with a $T$-like gate forms the most efficient single-qubit universal gate set. To carry out the algorithm fault tolerantly requires a code that implements $\ico$ transversally. However, no such code has ever been demonstrated in the literature. We fill this void by constructing a family of distance $d = 3$ codes that all implement $\ico$ transversally. A surprising feature of this family is that the codes can be deduced entirely from symmetry considerations that only $\ico$ affords. 
\end{abstract}

\maketitle

\section{Introduction}

Let $((n,K,d))$ denote an $n$-qubit quantum error-correcting code  with a codespace of dimension $K$ and distance $d$. The Eastin-Knill theorem \cite{eastinKnill} shows that when a code is non-trivial ($ d \geq 2 $), the logical operations in $ \SU(K) $ which can be implemented transversally are always a \textit{finite} subgroup $ \G \subset \SU(K) $. A logical gate $g$ is called transversal if $g$ can be implemented as $U_1 \otimes \cdots \otimes U_n$ where each $U_i \in \U(2)$. Transversal gates are considered naturally fault tolerant because they do not propagate errors between physical qubits. 

Our focus will be on encoding a single logical qubit into $n$ physical qubits ($ K=2 $). In this case, the Eastin-Knill theorem shows that the transversal gates must be a finite subgroup of $\SU(2)$. The finite subgroups of $\SU(2)$ are the cyclic groups, the binary dihedral groups, and three exceptional groups. We are primarily interested in the three exceptional groups: the binary tetrahedral group $\tet$, the binary octahedral group $\oct$, and the binary icosahedral group $\ico$. These three groups correspond to the lift through the double cover $ \SU(2) \to \SO(3)$ of the symmetry groups of the tetrahedron, octahedron, and icosahedron, respectively (see \cref{fig:platonicsolids}). For more information on the finite subgroups of $ \SU(2) $, see the Supplemental Material \cite{supp}.

The group $\oct$ is better known as the single qubit Clifford group $\mathsf{C}$. Many codes implement $\oct$ transversally. For example, the $ [[7,1,3]] $ Steane code and the $[[2^{2r-1}-1,1,2^r-1]]$ quantum punctured Reed-Muller codes. More generally, all doubly even self dual CSS codes implement $ \oct $ transversally. The group $\tet$ is a subgroup of the Clifford group and there are also many codes with transversal gate group $\tet$, the most famous example being the $[[5,1,3]]$ code.

In stark contrast, no code has ever been explicitly demonstrated to implement $\ico$ transversally. This omission is particularly glaring given the role $\ico$ plays in the ``optimal absolute super golden gate set"  proposed in \cite{superGoldenGates} as the best single qubit universal gate set.

\begin{figure}[htp]
    \centering
    \includegraphics[width=8cm]{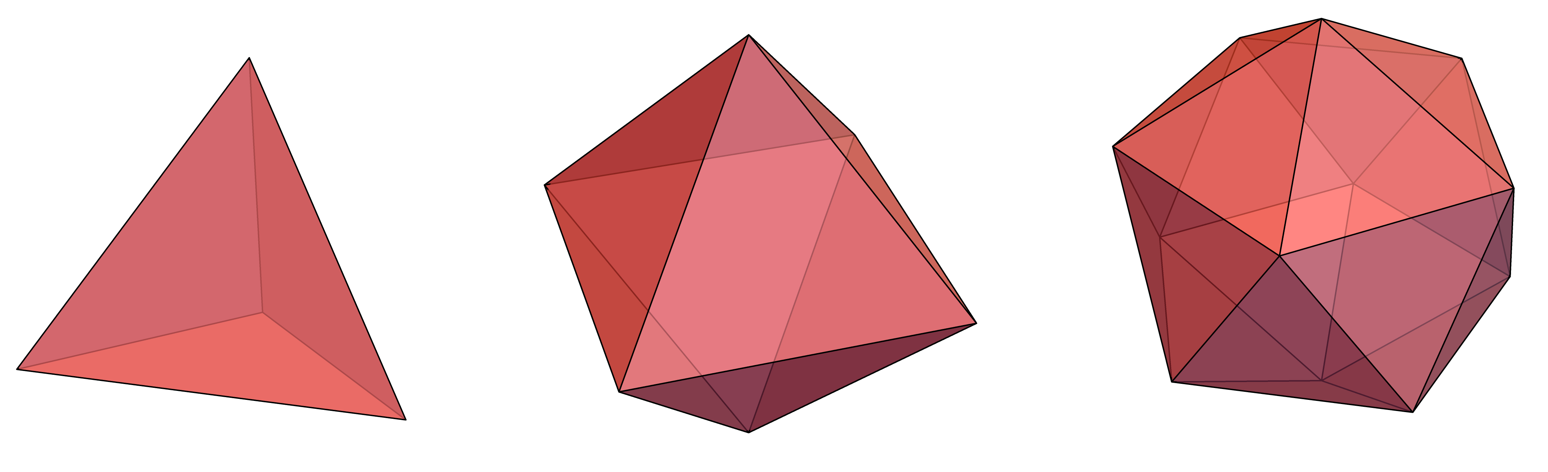}
    \caption{From left to right the following Platonic solids: tetrahedron, octahedron, icosahedron}
    \label{fig:platonicsolids}
\end{figure}

\emph{Super Golden Gates.---} A single qubit universal gate set is a finite collection of gates that generates a dense subset of $\SU(2)$. The Solovay-Kitaev theorem \cite{SKtheorem} says that a universal gate set can approximate any gate in $\SU(2)$ up to some $\epsilon$-precision using at most $\order{\log^c(1/\epsilon)}$ gates for some constant $c$ (see \cite{SK1,SK2,SK3} for bounds on $c$). Roughly: given a universal gate set, we can approximate any single-qubit gate using a relatively small number of gates. 

In the context of fault tolerance, we usually think of a universal gate set as $ \G + \tau$ where $\G$ is a finite group of gates considered ``cheap" to implement and $\tau$ is a single gate outside the group, which is considered ``expensive." This abstractly models how magic state distillation \cite{magicstatedist} works in practice; $\G$ is a set of transversal gates for some code (and so naturally fault tolerant) and $\tau$ is a gate that must be ``simulated" using magic states, distillation, and teleportation, and is usually quite costly to implement (cf.  \cite{magiccost1,magiccost2}). 

A \bfit{super golden gate set} \cite{superGoldenGates} is a universal gate set $\G + \tau$ that possesses optimal navigation properties and minimizes the number of expensive $\tau$ gates that are used (see section 2.2, 2.3 of \cite{fast2I} for a precise definition). We already know from the Solvay-Kitaev theorem that the total number of gates in any approximation will be small, but a super golden gate set in addition guarantees there won't be too many expensive $\tau$ gates. There are only finitely many super golden gate sets, including one for each of the symmetry groups of the platonic solids shown in \cref{fig:platonicsolids}.

The most familiar example of a super golden gate set is Clifford + $ T $, or equivalently, $\oct + T$. Here $T$ is the square root of the phase gate (also known as the $\pi/8$ gate). For this gate set, Clifford operations are indeed cheap since there are many codes that can implement them transversally, e.g., the $ [[7,1,3]] $ Steane code. Implementing the $ T $ gate fault tolerantly is standard in the magic state literature. As of writing, the best navigation algorithm for $\oct + T$ can efficiently factor any gate in $\SU(2)$ to within $\epsilon$ precision using at most $\tfrac{7}{3} \log_2(1/\epsilon^3)$ expensive $T$ gates (see Theorem 1 of \cite{cliffordplusTstier}).

Another example of a super golden gate set is $\ico + \tau_{60}$ defined in \cite{superGoldenGates}. Here, the cheap gates form the group $\ico$, while the expensive gate is called $\tau_{60}$. It is defined as
\[
    \tau_{60} := \tfrac{1}{\sqrt{5 \varphi + 7}} \smqty( i (2+ \varphi) & 1+i \\ -1+i & -i (2+\varphi) ), \numberthis
\]
where $\varphi = (1+\sqrt{5})/2$ denotes the golden ratio. The best navigation algorithm for $\ico + \tau_{60}$ can efficiently factor any gate in $\SU(2)$ to within $\epsilon$ precision using at most $\tfrac{7}{3} \log_{59}(1/\epsilon^3)$ expensive $\tau_{60}$ gates (see Theorem 1 of \cite{fast2I}).

Notice that the only difference between the number of $\tau$ gates in these two cases is in the base of the logarithm (which is related to the structure of the super golden gate set). Since $\log_2(x)= \log_2(59) \log_{59}(x) $, using the unviersal gate set $\ico + \tau_{60}$, instead of $\oct + T$, gives a $\log_2(59) \approx 5.9$ times reduction in the number of expensive $\tau$ gates (in the worst case). 

For example, if we want to approximate any gate in $\SU(2)$ up to a precision of $\epsilon = 10^{-10}$ then $\oct + T$ would need at most 233 $T$-gates whereas $\ico + \tau_{60}$ would only need at most $40$ $\tau_{60}$-gates. Out of all of the super golden gate sets, $\ico + \tau_{60}$ has the largest logarithm base and so it is optimal.

\emph{Summary of Results.---}
A practical implementation of the $\ico + \tau_{60}$ super golden gate set requires a cheap way to implement gates from $ \ico $. The most natural solution is to proceed as in the case of the Clifford + $ T $ super golden gate set and find quantum error-correcting codes that implement $\ico$ transversally. As already mentioned, no such code has even been demonstrated. 

In what follows, we fill this void. We first show that any code that supports $\ico$ transversally must be a non-additive code (\cref{thm:nonadditive}). Then we construct a $7$ qubit code that can correct an arbitrary error (i.e., $d =3$) and we show that it is the smallest code that can implement $\ico$ transversally (\cref{thm:smallest}). We then prove a correspondence between spin codes and multiqubit codes (\cref{lem:DickeBootstrap}) which we use to construct $d=3$ codes that implement $\ico$ transversally for all odd $n$ except $1,3,5,9,11,15,21$ (\cref{thm:family}). This result implies that the fast navigation algorithm for $\ico + \tau_{60}$ can be performed for nearly all odd numbers of qubits.  This abundance of codes is due to a symmetry phenomenon that is unique to $\ico$ among all finite subgroups of $\SU(2)$ (\cref{thm:auto1unique}).

\section{Preliminaries}
\ 
\emph{Gates.---} Single qubit quantum gates are usually presented as elements of the unitary group $\U(2)$. However, $\U(2) = e^{i \theta} \SU(2) $, so it is sufficient to consider quantum gates from the special unitary group $\SU(2)$. We will denote matrices from $\SU(2)$ by sans serif font. See \cref{tab:U2SU2gates} for our chosen correspondence. Most of the gates are standard with the notable exception of the ``facet gate" $F$ (see the Supplemental Material \cite{supp} for discussion).

\begin{table}[htp] 
 \small  
    \begin{tabular} {lll} \toprule  
     & $\U(2)$ & $\SU(2)$ \\ \toprule 
    Pauli-$X$ &  $X = \smqty( 0 & 1 \\ 1 & 0) $ & $\X = \smallminus i X$ \\ 
    Pauli-$Y$ & $Y = \smqty( 0 & i \\ \smallminus i & 0)$ & $\Y = \smallminus i Y$ \\ 
   Pauli-$Z$ & $Z = \smqty(1 & 0 \\ 0 & \smallminus 1)$ & $\Z = \smallminus i Z$ \\ \midrule  
     Hadamard & $H = \tfrac{1}{\sqrt{2}}\smqty(1 & 1 \\ 1 & \smallminus 1)$ & $\H = \smallminus i H$ \\ 
     Phase & $S = \smqty(1 & 0 \\ 0 & i)$ & $\S = e^{\smallminus i \pi /4} S$ \\ 
     Facet & $ F=H S^\dagger $  & $ \F = \tfrac{e^{\smallminus i \pi /4}}{\sqrt{2}}\smqty( 1 & \smallminus i \\ 
    1 & i) = \H \S^\dagger$ \\  \midrule  \addlinespace
   $\pi/8$-gate & $ T = \smqty(1 & 0 \\ 0 & e^{i\pi/4} )$ & $\mathsf{T} = \smqty( e^{\smallminus i \pi/8} & 0 \\ 0 & e^{i \pi /8} )$ \\ \addlinespace
   $\tfrac{2\pi}{2^r}$-Phase & $Ph(\tfrac{2\pi}{2^r}) = \smqty( 1 & 0 \\ 0 & e^{i 2\pi/2^r} )$ & $\Ph(\tfrac{2\pi}{2^r}) = \smqty( e^{\smallminus i \pi/2^r} & 0 \\ 0 & e^{i\pi/2^r} )$ \\ \addlinespace \bottomrule
    \end{tabular}
    \caption{}
    \label{tab:U2SU2gates}
\end{table}

\emph{Finite Subgroups.---} The single qubit Clifford group $\mathsf{C}$ is generated as $\mathsf{C} = \expval{ \mathsf{X}, \mathsf{Z}, \F, \H, \S }$ and has $48$ elements. This group is isomorphic to $\oct$. Although there are infinitely many (conjugate) realizations of $\oct$ in $\SU(2)$, $\expval{ \mathsf{X}, \mathsf{Z}, \F, \H, \S }$ is the only version that contains the Pauli group $\P = \expval{\X, \Z}$ and so it is the canonical choice. A subgroup of $\oct$ is the binary tetrahedral group $\tet$ with $ 24 $ elements. Again, $ \tet $ has infinitely many realizations, but the canonical choice is $\expval{ \X, \Z, \F}$, showcasing the role of the $\F$ gate. 

The binary icosahedral group $\ico$ of order $120$ also has infinitely many realizations. Unlike $\oct$ and $ \tet $ there is not a single canonical choice of $ \ico $ subgroup, but rather two, related by Clifford conjugation. One version of $ \ico $ is $ \expval{ \X, \Z, \F, \mathsf{\Phi} }$ where
\[
\mathsf{\Phi} = \tfrac{1}{2}
\smqty(
\varphi + i\varphi^{-1} & 1 \\
-1 &\varphi - i\varphi^{-1}).\numberthis
\]
The other version of $ \ico $ is $ \expval{ \X, \Z, \F, \mathsf{\Phi}^\star }$ where we obtain $ \mathsf{\Phi}^\star $ from $\mathsf{\Phi}$ by making the replacement $\sqrt{5} \to -\sqrt{5}$ and then taking the complex conjugate. 

\emph{Exotic Gates.---} The $ r+1 $ level of the (special) 1-qubit Clifford hierarchy is defined recursively as
\[
    \mathsf{C}^{(r+1)} := \{ \mathsf{U} \in \SU(2) : \mathsf{U} \P \mathsf{U}^\dagger \subset \mathsf{C}^{(r)} \}, \numberthis
\]
where $\mathsf{C}^{(1)}:=\P $ is the 1-qubit Pauli group \cite{semiClifford}. Clearly $\mathsf{C}^{(2)} $ is the one-qubit Clifford group $ \mathsf{C} $. The $\mathsf{T}$ gate is in $\mathsf{C}^{(3)}$, and in general $\Ph(2\pi/2^r)$ is in $\mathsf{C}^{(r)}$. In fact, every gate in \cref{tab:U2SU2gates} is in some level of the Clifford hierarchy. On the contrary, we have the following. 

\begin{lemma} \label{lem:exotic}
    The $\mathsf{\Phi}$ gate is not in the Clifford hierarchy. 
\end{lemma}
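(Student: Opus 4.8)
The plan is arithmetic: writing $\zeta_n := e^{2\pi i/n}$, I will show that the matrix entries of \emph{every} gate in the Clifford hierarchy lie in the field
\[
  \mathbb{K} := \textstyle\bigcup_{k\geq 1}\Q(\zeta_{2^k}), \numberthis
\]
generated by all $2$-power roots of unity, whereas the diagonal entries of $\mathsf{\Phi}$ do not. Granting this, the lemma is immediate: $\mathbb{K}$ is closed under complex conjugation (each $\Q(\zeta_{2^k})$ is, via $\zeta_{2^k}\mapsto\zeta_{2^k}^{-1}$), so if the entry $\tfrac12(\varphi+i\varphi^{-1})$ of $\mathsf{\Phi}$ lay in $\mathbb{K}$ then so would its conjugate $\tfrac12(\varphi-i\varphi^{-1})$, hence the sum $\varphi=\tfrac{1+\sqrt5}{2}$, forcing $\sqrt5\in\mathbb{K}$; but $\sqrt5\notin\mathbb{K}$ because $5$ ramifies in $\Q(\sqrt5)/\Q$ while only the prime $2$ ramifies in each $\Q(\zeta_{2^k})/\Q$ (equivalently, the only quadratic subfields of $\mathbb{K}$ are $\Q(i)$, $\Q(\sqrt2)$ and $\Q(\sqrt{-2})$). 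Hence $\mathsf{\Phi}\notin\mathsf{C}^{(r)}$ for every $r$.

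To prove the claim, the clean route is to invoke the structure of the single-qubit Clifford hierarchy \cite{semiClifford}: every one-qubit gate $\mathsf{U}\in\mathsf{C}^{(r)}$ is semi-Clifford, i.e.\ $\mathsf{U}=\mathsf{c}_1\mathsf{D}\mathsf{c}_2$ with $\mathsf{c}_1,\mathsf{c}_2\in\mathsf{C}$ and $\mathsf{D}$ diagonal. Since Cliffords normalize $\P$ and Clifford conjugation preserves each level of the hierarchy, $\mathsf{D}=\mathsf{c}_1^{-1}\mathsf{U}\mathsf{c}_2^{-1}$ is again in $\mathsf{C}^{(r)}$; an easy induction then shows that the diagonal elements of $\mathsf{C}^{(r)}$ are exactly the dyadic phase gates of level at most $r$, so $\mathsf{D}=\Ph(\tfrac{2\pi j}{2^{r}})$ up to a global phase in $\mathbb{K}$. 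As the canonical realization of $\mathsf{C}=\oct$ has entries in $\Q(\zeta_8)\subset\mathbb{K}$ and $\Ph(\tfrac{2\pi j}{2^{r}})$ has entries $\zeta_{2^{r+1}}^{\pm j}\in\Q(\zeta_{2^{r+1}})\subset\mathbb{K}$, the product $\mathsf{c}_1\mathsf{D}\mathsf{c}_2$ has all its entries in $\mathbb{K}$, as desired.

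A self-contained proof of the claim is also available, by induction on $r$. The set $\mathsf{C}^{(r)}$ is \emph{finite}, since $\mathsf{U}\mapsto(\mathsf{U}\X\mathsf{U}^\dagger,\mathsf{U}\Z\mathsf{U}^\dagger)\in\mathsf{C}^{(r-1)}\times\mathsf{C}^{(r-1)}$ has fibers of size $\leq 2$ (the centralizer of $\P$ in $\SU(2)$ is $\{\pm\mathbbm{1}\}$). For $\mathsf{U}\in\mathsf{C}^{(r)}$, the conjugates $A:=\mathsf{U}\X\mathsf{U}^\dagger$ and $B:=\mathsf{U}\Z\mathsf{U}^\dagger$ lie in $\mathsf{C}^{(r-1)}$, with entries in some $\Q(\zeta_{2^m})\subset\mathbb{K}$ by induction; the columns of $\mathsf{U}$ are the $(\mp i)$-eigenvectors of $B$ and hence are $\Q(\zeta_{2^m})$-rational, and $\mathsf{U}\X=A\mathsf{U}$ locks their relative scale over the same field, so $\mathsf{U}=\mu\,\mathsf{U}_0$ with $\mathsf{U}_0$ defined over $\Q(\zeta_{2^m})$ and $\mu^2=(\det\mathsf{U}_0)^{-1}\in\Q(\zeta_{2^m})$. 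The delicate point --- and the step I expect to be the real obstacle --- is to show that $\mu$ stays inside the $2$-power cyclotomic tower, i.e.\ that the square class of $\det\mathsf{U}_0$ in $\Q(\zeta_{2^m})^{\times}$ is trivial or equal to that of $\zeta_{2^m}$; this should follow from unitarity of $\mathsf{U}$ (which forces $\mathsf{U}_0\mathsf{U}_0^{\dagger}$ to be a positive real scalar matrix), and it is exactly the $2$-adic bookkeeping that the semi-Clifford structure theorem makes unnecessary.
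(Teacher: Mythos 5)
Your first argument is essentially the paper's own proof: both invoke the fact that single-qubit Clifford hierarchy gates are semi-Clifford, reduce to showing the diagonal piece $\mathsf{D}$ lies in $\mathsf{C}^{(r)}$ (the paper cites Proposition 3 of the semi-Clifford reference; you give the Clifford-stability argument directly), then observe that such diagonal gates have dyadic-phase entries and Cliffords have entries in $\Q(\zeta_8)$, so every $\mathsf{U}\in\mathsf{C}^{(r)}$ has entries in a $2$-power cyclotomic field while $\sqrt5$ lives in none of them. Your ramification/quadratic-subfield justification for $\sqrt5\notin\mathbb{K}$ is a welcome bit of detail the paper leaves implicit. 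The ``self-contained'' alternative at the end is, as you yourself flag, not complete --- the global-phase bookkeeping ($\mu\in\mathbb{K}$) is precisely the nontrivial step that the semi-Clifford structure theorem is there to handle, so as written it is a sketch rather than a proof.
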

The essence of the proof - worked out in the the Supplemental Material \cite{supp} - is that the golden ratio $\varphi$ cannot be expressed in terms of iterated square roots of $ 2 $. Not being in the Clifford hierarchy is the sense in which we call the $\mathsf{\Phi}$ gate  \bfit{exotic}. In fact, the only gates from $\ico$ that are in the Clifford hierarchy are the gates forming the subgroup $\tet = \expval{\X, \Z, \F}$. The $ 96 $ other gates in $\ico$ are exotic (see the Supplemental Material \cite{supp}). 

On the other hand, it is known that the transversal gate group of a \textit{stabilizer} code must lie in a finite level of the Clifford hierarchy \cite{wirthmüller2011automorphisms,disjointness, transuniv}. In other words, exotic gates cannot be in the transversal gate group of a stabilizer code. This along with \cref{lem:exotic} implies our first claim.

\begin{theorem} \label{thm:nonadditive}
    Any code that implements $\ico$ transversally must be non-additive.
\end{theorem}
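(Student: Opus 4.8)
The plan is to obtain this statement essentially for free by lining up two facts: \cref{lem:exotic}, and the known confinement of stabilizer-code transversal gates to the Clifford hierarchy. First I would note that for qubit codes ``additive'' and ``stabilizer'' are the same condition (an additive code is a common eigenspace of an abelian subgroup of the Pauli group), so it suffices to rule out stabilizer codes. Next I would invoke the structural result that the group of transversal logical gates of any stabilizer code is contained in some finite level $\mathsf{C}^{(r)}$ of the single-qubit Clifford hierarchy \cite{wirthmüller2011automorphisms,disjointness,transuniv}. Hence, if a stabilizer code implemented $\ico$ transversally, the realization of $\ico$ acting on its logical qubit would lie inside $\bigcup_{r\ge 1}\mathsf{C}^{(r)}$.

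The contradiction is then supplied directly by \cref{lem:exotic}: in the canonical realization $\ico=\expval{\X,\Z,\F,\mathsf{\Phi}}$ the generator $\mathsf{\Phi}$ is exotic, i.e.\ it lies in no level of the hierarchy (indeed $96$ of the $120$ elements of $\ico$ are exotic, see the Supplemental Material \cite{supp}), so $\ico\not\subseteq\bigcup_r\mathsf{C}^{(r)}$ as realized and no stabilizer realization can exist. The one point that deserves a word of care is that the transversal gate group is only pinned down after choosing a logical basis, and a change of basis conjugates it in $\SU(2)$; I would dispatch this by recalling that every copy of $\ico$ inside $\SU(2)$ is conjugate to $\expval{\X,\Z,\F,\mathsf{\Phi}}$ while the property of being a stabilizer code is basis-independent, so one may assume the transversal group is exactly the canonical realization. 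A cleaner, manifestly conjugation-invariant variant is to argue on traces: any faithful realization of $\ico$ in $\SU(2)$ contains an element of order $10$ whose trace is $2\cos(\pi/5)=\varphi$, which forces $\sqrt 5$ into its matrix entries, whereas the entries (hence traces) of Clifford-hierarchy gates are built only from iterated square roots of $2$.

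The hard part is therefore not located in this theorem but in \cref{lem:exotic} itself --- the number-theoretic fact that $\varphi$ escapes the tower of quadratic extensions of $\mathbb{Q}$ generated by repeatedly adjoining $\sqrt 2$ --- which is proved separately in \cite{supp}. Granting that lemma and the cited Clifford-hierarchy confinement, the theorem follows in a couple of lines, so I would keep the write-up short and let those two results do the work.
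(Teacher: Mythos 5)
Your proposal is correct and follows essentially the same route as the paper: combine the known confinement of a stabilizer code's transversal gate group to a finite level of the Clifford hierarchy with \cref{lem:exotic} to exclude $\ico$. The extra care you take about conjugation (and your trace-based, conjugation-invariant variant noting that any $\SU(2)$-realization of $\ico$ has an order-$10$ element with trace in $\mathbb{Q}(\sqrt{5})\setminus\mathbb{Q}(\zeta_{2^{r+1}})$) is a sound tightening of a point the paper leaves implicit, but it is a refinement of, not a departure from, the same argument.
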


A quantum code is called \bfit{non-additive} if it is not equivalent via non-entangling gates to any stabilizer code. For more background about non-additive codes, see the Supplemental Material \cite{supp}.

\emph{Transversality.---} Let $ g \in \U(2) $ be a logical gate for an $((n,2,d))$ code. We say that $ g $ is \bfit{exactly transversal} if the physical gate $ g^{\otimes n} $ implements logical $ g $ on the code space. We say $ g $ is $h$-\bfit{strongly transversal} if there exists some $ h \in \U(2) $, not necessarily equal to $ g $, such that the physical gate $ h^{\otimes n} $  implements logical $ g $ on the code space. 

An $((n,2,d))$ code that implements the group $\G$ strongly transversally must transform in a $2$-dimensional faithful irrep of $\G$. For $\ico$, there are only two such irreps, the fundamental representation $\pi_2$ and the closely related representation $\overline{\pi_2}$, which is just permuted by an outer automorphism (the character table for $\ico$ can be found in the Supplemental Material \cite{supp}).

\section{The Smallest $\ico$ code}

Using a computerized search over $\ico$ invariant subspaces, we found a $((7,2,3))$ code that implements $\ico$ transversally. A normalized basis for the codespace is
\begin{align*}
    \logzero &= \tfrac{\sqrt{15}}{8} \ket*{D_0^7} + \tfrac{\sqrt{7}}{8} \ket*{D_2^7} + \tfrac{\sqrt{21}}{8} \ket*{D_4^7} - \tfrac{\sqrt{21}}{8} \ket*{D_6^7} \\
    \logone &= -\tfrac{\sqrt{21}}{8} \ket*{D_1^7} + \tfrac{\sqrt{21}}{8} \ket*{D_3^7} + \tfrac{\sqrt{7}}{8} \ket*{D_5^7} + \tfrac{\sqrt{15}}{8} \ket*{D_7^7} . \numberthis \label{code:us}
\end{align*}
Here $\ket*{D_w^n}$ is a Dicke state \cite{dicke1,dicke2,dicke3,dicke4,dicke5} defined as the (normalized) uniform superposition over all $\binom{n}{w}$ of the $n$-qubit states with Hamming weight $w$. For example,
\[
    \ket*{D_2^3} = \tfrac{1}{\sqrt{3}}\qty( \ket{011} + \ket{110} + \ket{101} ). \numberthis
\]
The weight enumerator coefficients \cite{quantumMacWilliams} of the $ ((7,2,3)) $ $ \ico $ code are
\begin{subequations}
\begin{align}
    A &= \qty(1,0,7,0,7,0,49,0), \\ 
    B &= \qty(1,0,7,42,7,84,49,66).
\end{align}
\end{subequations}
We immediately observe that the code distance is $d = 3$ since $A_i = B_i$ for each $i=0,1,2$. 

Both $X$ and $Z$ are exactly transversal since $X^{\otimes n}$ sends $\ket*{D_w^n}$ to $\ket*{D_{n-w}^n}$ and $Z^{\otimes n}$ sends $\ket*{D_w^n}$ to $(-1)^w \ket*{D_w^n}$. Thus $\X^{\otimes 7}$ implements logical $-\X$ and $\Z^{\otimes 7}$ implements logical $-\Z$. Logical $F$ is strongly-$F^*$ transversal and logical $\F$ is strongly-$\F^*$ transversal, where $ * $ denotes complex conjugation. The $[[7,1,3]]$ Steane code also implements logical $F$ and $\F$ in this way, in contrast with the $[[5,1,3]]$ code, where both $ F $ and $ \F $ are exactly transversal. Lastly, logical $\mathsf{\Phi}^\star$ is strongly-$\mathsf{\Phi}$ transversal. It follows that the code in \cref{code:us} implements $\ico$ transversally. This is the first code to have a transversal implementation of a gate outside of the Clifford hierarchy.  Note that since logical $ \mathsf{\Phi}^\star $ is being implemented, rather than logical $ \mathsf{\Phi} $, this code lives in the $\overline{\pi_2}$ irrep (as opposed to the $\pi_2$ irrep).

\begin{theorem} \label{thm:smallest}
    The smallest non-trivial code ($ d \geq 2 $) with $\ico$ strongly transversal is the $((7,2,3))$ code in \cref{code:us}. It is the unique such code in $7$ qubits. 
\end{theorem}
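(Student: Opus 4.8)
The plan is to recast the statement as a question about the representation theory of $\ico$ acting on $(\C^2)^{\otimes n}$ through a tensor power of a faithful two–dimensional irrep. By the remark in the preamble, a code that implements $\ico$ strongly transversally is a two–dimensional subspace $W\subset(\C^2)^{\otimes n}$ carrying a faithful two–dimensional representation of $\ico$; the only such irreps are $\pi_2$ and $\overline{\pi_2}$. Up to conjugation in $\SU(2)$ I may take the physical action to be $g\mapsto\pi_2(g)^{\otimes n}$ (the two binary icosahedral subgroups of $\SU(2)$ are related by a Galois automorphism that preserves code parameters, so nothing is lost), so that $W$ is a two–dimensional $\ico$-submodule of $\pi_2^{\otimes n}$ isomorphic to $\pi_2$ or to $\overline{\pi_2}$.

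First I would clear the small cases cheaply. Since $\pi_2(-I)=-I$, the operator $\pi_2(-I)^{\otimes n}=(-1)^nI$ is the identity whenever $n$ is even; then $-I$ acts trivially on $W$, so the logical representation factors through $\ico/\{\pm I\}\cong A_5$, which has no faithful two–dimensional representation --- a contradiction. Hence $n$ must be odd, eliminating $n=2,4,6$, and $n=1$ yields only the trivial $d=1$ code. This leaves $n=3,5,7$.

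The heart of the argument is a multiplicity count. Using the character table of $\ico$, I would evaluate $m_n:=\langle\chi_{\pi_2}^{\,n},\chi_{\pi_2}\rangle$ and $\overline m_n:=\langle\chi_{\pi_2}^{\,n},\chi_{\overline{\pi_2}}\rangle$, the multiplicities of $\pi_2$ and $\overline{\pi_2}$ in $\pi_2^{\otimes n}$. The golden-ratio character values on the order-$5$ and order-$10$ conjugacy classes produce closed forms in Lucas numbers, giving $(m_n,\overline m_n)=(2,0)$, $(5,0)$, $(14,1)$ for $n=3,5,7$. Two features of this table drive everything: (i) $\overline{\pi_2}$ does not occur for $n=3,5$ and occurs with multiplicity exactly $1$ for $n=7$; and (ii) for $n=3,5,7$ the number $m_n$ equals the classical multiplicity $C_{(n+1)/2}$ (a Catalan number) of spin-$\tfrac12$ inside $(\C^2)^{\otimes n}$ as an $\SU(2)$-module. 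Since spin-$\tfrac12$ restricts to $\pi_2$ on $\ico$, the $\SU(2)$-isotypic component of spin-$\tfrac12$ is contained in the $\ico$-isotypic component of $\pi_2$, and (ii) says the two have equal dimension, so they coincide. By Schur's lemma every $\ico$-submodule of this component isomorphic to $\pi_2$ has the form (a copy of spin-$\tfrac12$) $\otimes$ (a line), hence is $\SU(2)$-invariant, and then the Eastin--Knill theorem forces its distance to be $1$.

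Assembling the pieces: for $n=3,5$ every $\ico$-strongly-transversal code is $\cong\pi_2$ by (i), hence has $d=1$, so together with $n=1,2,4,6$ there is no non-trivial such code for $n\le6$. For $n=7$, a code $\cong\pi_2$ again has $d=1$ by the previous paragraph, while a code $\cong\overline{\pi_2}$ must equal the $\overline{\pi_2}$-isotypic component of $\pi_2^{\otimes7}$, which by (i) is a single two-dimensional subspace; hence there is at most one candidate. The subspace $\mathrm{span}\{\logzero,\logone\}$ of \cref{code:us} is $\ico$-invariant and realizes $\overline{\pi_2}$ (it implements logical $\mathsf{\Phi}^\star$ through physical $\mathsf{\Phi}$, as noted above), so it is precisely that component, and its weight enumerators computed earlier give $d=3\ge2$. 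This establishes both the seven-qubit uniqueness and the minimality of $n=7$. The main obstacle is the exact multiplicity computation --- in particular verifying that $\overline{\pi_2}$ first appears, and with multiplicity one, exactly at $n=7$, and the Catalan coincidence (ii) that pins every $\pi_2$-type code down to an $\SU(2)$-symmetric one --- together with the bookkeeping needed to reduce to a single realization of $\ico$ inside $\SU(2)$.
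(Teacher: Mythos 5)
Your argument is correct and mirrors the paper's proof: rule out even $n$ and $n=1$ by faithfulness, show that every $\pi_2$-type submodule for $n=3,5,7$ lies inside the spin-$\tfrac12$ $\SU(2)$-isotypic component and hence has $d=1$ by Eastin--Knill, and show that $\overline{\pi_2}$ first appears, with multiplicity one, at $n=7$, where the explicit code gives $d=3$. The only genuine difference is how the decomposition is obtained: you compute $\ico$-multiplicities by character inner products and compare against the spin-$\tfrac12$ (Catalan) multiplicity, whereas the paper reads off the same localization of $\pi_2$ into the spin-$\tfrac12$ sector directly from the Schur--Weyl decomposition followed by $\SU(2)\downarrow\ico$ branching (and handles $n\le 3$ by citing the general nonexistence of nontrivial codes on so few qubits).
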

The proof - given in the Supplemental Material \cite{supp} - is a basic application of branching rules. The minimal error-correcting codes that implement $\tet$, $\oct$ and $ \ico $ transversally are the $ [[5,1,3]] $ code, the Steane $ [[7,1,3]] $ code, and the $ ((7,2,3)) $ code respectively. It is with this observation that the $((7,2,3))$ code should be regarded as fundamental.

\section{A Family of $\ico$ Codes}

In \cite{gross1}, the author considers the problem of encoding a qubit into a single large spin. Spin $j$ corresponds to the unique $ 2j+1 $ dimensional irrep of $ \SU(2) $, which is spanned by the eigenvectors of $J_z$ (the $z$-component of angular momentum) denoted $\ket{j,m}$ for $|m| \leq j$. Each $g \in \SU(2)$ has a natural action on the space via the Wigner $ D $ rotation operators $ D^j(g) $.

Encoding a qubit into this space means choosing a $2$-dimensional subspace. If $D^j(g)$ preserves the codespace then it will implement a logical gate. The collection of logical gates forms a finite group $\G$. We will be interested in the cases for which the logical gate is implemented as $\lambda(g)$, where $\lambda$ is an irrep of $\G$. 

We measure a spin code's performance based on how well it can correct small order isotropic errors. This is equivalent to correcting products of angular momenta. Analogous to multiqubit codes, we say a spin code has \textit{distance} $d$ if for all codewords $\logicalket{u}$ and $\logicalket{v}$ we have
\[
\logicalbra{u} J_{\alpha_1} \cdots J_{\alpha_p} \logicalket{v} = C \braket{\overline{u} }{\overline{v}} \qquad \text{for } 0 \leq p < d. \numberthis \label{eqn:KLspin}
\]
Here $J_{\alpha_i}$ is either $J_z$ or a ladder operator $J_\pm$ \cite{sakurai} and the constant $C$ is allowed to depend on $\alpha_1, \cdots, \alpha_p$ but not on the codewords. These are the Knill-Laflamme conditions (KL) for spin codes \cite{KL,gross1,gross2}.

A spin $j$ system is isomorphic to the permutationally invariant subspace of the tensor product of $n = 2j$ many spin $1/2$ systems \cite{sakurai}. An explicit isomorphism is the Dicke state mapping
\[
    \ket{j,m} \overset{\D}{\longmapsto} \ket*{D_{j-m}^{2j}}. \numberthis \label{eqn:dicke}
\]
The Dicke state mapping $\D$ behaves as an intertwiner between the natural action of $ \SU(2) $ on a spin $ j $ irrep and the natural action of $ \SU(2) $ on an $ n=2j $ qubit system via the tensor product:
\[
\D\qty[ D^j(g) \ket{j,m} ] = g^{\otimes n}  \D \ket{j,m} .  \numberthis \label{eqn:dickecovar}
\]
The main implication of this property is that $\D$ converts logical gates of a spin code into logical gates of the corresponding multiqubit code. On the other hand, the Dicke state mapping $\D$ also behaves well with respect to error-correcting properties.

\begin{lemma}\label{lem:DickeBootstrap} A spin $j$ code with distance $d = 3$ that implements logical gates from $\mathsf{G}$ corresponds under $\D$ to a permutationally invariant $\mathsf{G}$-transversal $n= 2j$ multiqubit code with distance $d=3$. 
\end{lemma}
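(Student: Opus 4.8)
The plan is to verify the three assertions in turn --- permutation invariance, $\mathsf{G}$-transversality, and distance $3$ --- with essentially all of the work concentrated in the last one. Invariance is immediate: the Dicke map $\D$ of \cref{eqn:dicke} is an isometric isomorphism of the spin-$j$ irrep onto the permutationally invariant subspace $\mathrm{Sym}^n(\C^2)\subset(\C^2)^{\otimes n}$ with $n=2j$, so the image of a two-dimensional spin codespace is a two-dimensional subspace of $\mathrm{Sym}^n(\C^2)$, i.e.\ a permutationally invariant $n$-qubit code, and $\D$ preserves every inner product $\braket{\overline{u}}{\overline{v}}$. Transversality follows from the intertwining identity \cref{eqn:dickecovar}: if $D^j(g)$ preserves the spin codespace and acts there as the logical gate $\lambda(g)$, then $g^{\otimes n}$ preserves the image codespace and acts there as the same $\lambda(g)$; ranging over $g$ exhibits $\mathsf{G}$ as implemented strongly transversally on the multiqubit code.

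For the distance I must produce, for every $n$-qubit Pauli $E$ of weight at most $2$, a constant $C_E$ independent of the codewords with $\logicalbra{u}E\logicalket{v}=C_E\braket{\overline{u}}{\overline{v}}$; since such $E$ span all weight-$\le 2$ errors, this is exactly the Knill--Laflamme condition for multiqubit distance $3$. First I would use that $\D\logicalket{u},\D\logicalket{v}$ are permutationally invariant to replace $E$ by its symmetrization $\overline{E}:=\tfrac{1}{n!}\sum_{\pi\in S_n}\pi E\pi^{-1}$, valid because $\bra{\D\overline u}E\ket{\D\overline v}=\bra{\D\overline u}\pi^{-1}E\pi\ket{\D\overline v}$ for all $\pi$. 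The core step is then a short, explicit bookkeeping: $\overline{E}$ is a polynomial of degree $\le 2$ in the collective operators $S_A:=\sum_{i=1}^{n}A_i$, $A\in\{X,Y,Z\}$. For $E$ of weight $1$ this is immediate, $\overline{E}=\tfrac1n S_A$; for weight $2$ one writes $S_AS_B=\sum_{i\ne k}A_iB_k+\sum_i A_iB_i$, notes that the diagonal term is $nI$ if $A=B$ and $\pm i\,S_C$ (with $C$ the third Pauli) if $A\ne B$ --- a collective operator of strictly smaller weight --- and concludes that $\overline{E}$ lies in the span of $\{I,S_X,S_Y,S_Z\}$ and their pairwise products. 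Restricting to $\mathrm{Sym}^n(\C^2)$ and transporting through $\D$ turns the collective operators into angular momenta, $S_Z\mapsto 2J_z$, $S_X\mapsto J_++J_-$, $S_Y\mapsto -i(J_+-J_-)$, so $\D$ sends $\overline E$ to a linear combination of $I$ and products $J_{\alpha_1}\cdots J_{\alpha_p}$ with $p\le 2$ and each $\alpha_i\in\{z,+,-\}$. Applying the spin distance-$3$ condition \cref{eqn:KLspin} term by term, and using that $\D$ preserves $\braket{\overline u}{\overline v}$, yields $\logicalbra{u}E\logicalket{v}=C_E\braket{\overline u}{\overline v}$ with $C_E$ a fixed linear combination of the spin constants, hence codeword-independent.

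The only place I expect real friction is getting the weight-$2$ bookkeeping exactly right: keeping track that the diagonal terms $\sum_iA_iB_i$ appearing in $S_AS_B$ are genuinely lower-weight collective operators rather than new objects, and handling the noncommutativity of $J_z,J_\pm$ (the order within a product matters only up to an order-$\le 1$ commutator correction, which is itself covered by the $p\le 1$ part of the hypothesis). Everything else is a direct consequence of the isometry and intertwining properties of $\D$ recorded in \cref{eqn:dicke,eqn:dickecovar}. Running the same computation in reverse --- symmetrized weight-$\le k$ Paulis span the degree-$\le k$ polynomials in $S_X,S_Y,S_Z$ --- shows in addition that the multiqubit distance is exactly $3$ and, since $\mathrm{Sym}^n(\C^2)$ is itself the spin-$j$ irrep, that every permutationally invariant distance-$3$ code on $n=2j$ qubits arises from a spin-$j$ code in this way, so the correspondence is two-way.
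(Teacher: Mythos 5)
Your proof is correct and follows essentially the same route as the paper's: both reduce to permutationally invariant errors by symmetrization, translate collective operators into products of angular momenta through $\D$, and handle the residual weight- and order-mixing by appealing to the lower-rank KL conditions. The only cosmetic differences are that you pull multiqubit errors back to spin operators rather than pushing spin errors forward, and you take the intertwining identity \cref{eqn:dickecovar} as given whereas the paper re-derives it as part of the lemma's proof.
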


A proof is given in the Supplemental Material \cite{supp}. This lemma means we can focus on constructing spin codes with good distance that transform in the group $\ico$. To guarantee a $d = 3$ spin code, we need to satisfy the KL conditions (\cref{eqn:KLspin}) for $p$-fold products of angular momentum where $p = 0, 1, 2$ (henceforth called the \textit{rank}). 

If the codewords are orthonormal then the rank-0 conditions are automatically satisfied. Thus we need to find codewords such that the KL conditions hold for the rank-1 errors $J_\alpha$ and for the rank-2 errors $J_\alpha J_\beta$.

Very surprisingly, the rank-1 conditions are \textit{always} satisfied when the logical group is $\ico$ and the irrep is $\overline{\pi_2}$.

\begin{lemma}[Rank-1] \label{lem:rank1} A $\ico$ spin code transforming in the $\overline{\pi_2}$ irrep satisfies all rank-$1$ KL conditions \textit{automatically}. 
\end{lemma}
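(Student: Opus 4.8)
The plan is to package the rank-$1$ conditions as the vanishing of an $\ico$-equivariant map and then kill that map with Schur's lemma. Let $P$ be the orthogonal projector onto the two-dimensional codespace $\mathcal{C}$. Since every Wigner operator $D^j(g)$, $g\in\ico$, is unitary and preserves $\mathcal{C}$, it also preserves $\mathcal{C}^{\perp}$, so $[D^j(g),P]=0$. The vector operator $(J_z,J_+,J_-)$ spans a three-dimensional space $V$ of operators on which $\SU(2)$ acts by conjugation, $J\mapsto D^j(g)\,J\,D^j(g)^{-1}$, as the spin-$1$ (adjoint) representation; writing $\pi_2$ for the restriction to $\ico$ of the defining spin-$\tfrac12$ representation, this means $V\cong\mathrm{Sym}^2\pi_2$ as an $\ico$-representation. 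Consider
\[
\phi\colon V\longrightarrow\mathrm{End}(\mathcal{C}),\qquad J\longmapsto P\,J\,P\,|_{\mathcal{C}}.
\]
Because $D^j(g)$ commutes with $P$ and restricts on $\mathcal{C}$ to $\overline{\pi_2}(g)$, a direct check shows $\phi$ intertwines the conjugation action of $\ico$ on $V$ with the conjugation action on $\mathrm{End}(\mathcal{C})\cong\overline{\pi_2}\otimes\overline{\pi_2}^{*}$. The rank-$1$ KL conditions of \cref{eqn:KLspin} (the case $p=1$) say precisely that $\phi(J_z),\phi(J_+),\phi(J_-)$ are all scalar, i.e.\ that $\mathrm{image}(\phi)\subseteq\mathbb{C}\,\mathrm{id}$; so it suffices to prove $\phi=0$.

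Next, decompose the target. The representation $\overline{\pi_2}$ is self-dual (it has real character), so
\[
\mathrm{End}(\mathcal{C})\cong\overline{\pi_2}\otimes\overline{\pi_2}=\Lambda^2\overline{\pi_2}\oplus\mathrm{Sym}^2\overline{\pi_2}=\mathbf{1}\oplus\mathrm{Sym}^2\overline{\pi_2},
\]
the trivial representation plus a three-dimensional one. By Schur's lemma $\mathrm{Hom}_{\ico}(\mathrm{Sym}^2\pi_2,\ \mathbf{1}\oplus\mathrm{Sym}^2\overline{\pi_2})=0$ as soon as $\mathrm{Sym}^2\pi_2$ is irreducible and inequivalent to each summand; since it is three-dimensional and nontrivial (indeed it is one of the two three-dimensional irreducibles of the icosahedral group, hence irreducible), the whole lemma reduces to the single claim
\[
\mathrm{Sym}^2\pi_2\not\cong\mathrm{Sym}^2\overline{\pi_2}\qquad\text{as }\ico\text{-representations.}
\]

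This inequivalence is the only place the special structure of $\ico$ enters, and is where I expect the real content to be. It is a one-line computation with the character table of $\ico$ (recorded in the Supplemental Material): the two three-dimensional irreducibles of $\ico$ are distinguished exactly by their characters on the order-$5$ conjugacy classes, where the values $2\cos(2\pi/5)$ and $2\cos(4\pi/5)$ get swapped. Applying the Galois conjugation $\sqrt{5}\mapsto-\sqrt{5}$ to the character of $\pi_2$ to get that of $\overline{\pi_2}$, and using $\chi_{\mathrm{Sym}^2\rho}(g)=\tfrac12(\chi_\rho(g)^2+\chi_\rho(g^2))$ together with $\varphi^2=\varphi+1$, one finds that on a fixed such class $\mathrm{Sym}^2\pi_2$ has character $\varphi$ while $\mathrm{Sym}^2\overline{\pi_2}$ has character $1-\varphi\neq\varphi$; equivalently, these are the two three-dimensional irreps swapped by the outer automorphism of $\ico$ that interchanges $\pi_2$ and $\overline{\pi_2}$. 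Hence $\phi=0$ and every rank-$1$ condition holds automatically, in fact with all constants $C$ in \cref{eqn:KLspin} equal to $0$. The same bookkeeping also explains why the hypothesis cannot be dropped: for a code in the $\pi_2$ irrep the target becomes $\mathbf{1}\oplus\mathrm{Sym}^2\pi_2$, the relevant $\mathrm{Hom}$ space is one-dimensional, and the rank-$1$ conditions become a genuine constraint.
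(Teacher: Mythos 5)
Your proposal is correct and it takes a genuinely different, more conceptual route than the paper's. The paper organizes the rank-$1$ sandwich $\Pi_\lambda\,\rho^j(E)\,\Pi_\lambda$ into a representation-independent quantity $\widetilde{\Gamma}_{(\G,\lambda)}(g',E)\in\mathfrak{su}_\C(2)$ and then verifies $\widetilde{\Gamma}_{(\ico,\overline{\pi_2})}\equiv 0$ by a finite computer check over the $120$ elements of $\ico$ and a basis of $\mathfrak{su}_\C(2)$; the result is mathematically rigorous but offers no structural explanation. You instead recognize $\phi(J)=P J P\vert_{\mathcal{C}}$ as an $\ico$-intertwiner from the adjoint (restricted along the defining embedding, i.e.\ $\mathrm{Sym}^2\pi_2\cong\pi_3$) into $\mathrm{End}(\mathcal{C})\cong\mathbf{1}\oplus\mathrm{Sym}^2\overline{\pi_2}\cong\mathbf{1}\oplus\overline{\pi_3}$, and kill it with Schur's lemma once the character identity $\chi_{\mathrm{Sym}^2\rho}(g)=\tfrac12(\chi_\rho(g)^2+\chi_\rho(g^2))$ plus $\varphi^2=\varphi+1$ confirms $\pi_3\not\cong\overline{\pi_3}$. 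Your reduction of the entire lemma to $\mathrm{Sym}^2\pi_2\not\cong\mathrm{Sym}^2\overline{\pi_2}$ both eliminates the computer verification and isolates exactly the feature of $\ico$ responsible: the two three-dimensional irreps are Galois-conjugate but distinct, and $\overline{\pi_2}$ is a faithful $2$-dimensional irrep that is not a one-dimensional twist of the defining $\pi_2$. That observation also makes the uniqueness claim in \cref{thm:auto1unique} nearly self-evident (for $\tet$ and $\oct$ every faithful $2$-dimensional irrep is such a twist, so $\mathrm{Sym}^2\lambda$ falls back onto $\mathrm{Sym}^2\pi_2$ and the relevant $\mathrm{Hom}$ space is nonzero; for the binary dihedral groups $\mathrm{Sym}^2\pi_2$ is reducible), whereas the paper again resorts to case-by-case computation there. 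The only point worth tightening is the remark that ``it suffices to prove $\phi=0$'': strictly you only need $\mathrm{image}(\phi)\subseteq\C\,\mathrm{id}$, but since $\pi_3$ is nontrivial and irreducible Schur already forces the component into the trivial summand to vanish, so the conclusion $\phi=0$ (and hence $C=0$ in \cref{eqn:KLspin}) is indeed the correct strengthening, not an overreach.
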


On the other hand, rank-2 errors are satisfied quite generically. Similar ideas can be found in \cite{gross1,gross2,EricIanexactlytransversal}.

\begin{lemma}[Rank-2] \label{lem:rank2} Suppose a spin code implements logical $\X$ and logical $\Z$ using the physical gates $D^j(\X)$ and $D^j(\Z)$ respectively. If the codewords are real and the rank-1 KL conditions are satisfied, then the rank-2 KL conditions are also satisfied. 
\end{lemma}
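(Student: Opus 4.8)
The plan is to recast the rank-$2$ Knill--Laflamme conditions \cref{eqn:KLspin} in terms of irreducible spherical tensors and then exploit the $\X$ and $\Z$ symmetries together with reality. Write $P$ for the projector onto the two-dimensional codespace and, for an operator $O$ on the spin-$j$ space, set $\langle O\rangle := POP$, viewed as a $2\times 2$ matrix on the logical qubit. Any quadratic product $J_\alpha J_\beta$ with $\alpha,\beta\in\{z,+,-\}$ decomposes as a scalar multiple of the Casimir $J^2$ (rank $0$), a linear combination of the generators $J_\gamma$ (rank $1$, produced by the commutators), and a linear combination of the five components $T^2_q$, $q=-2,\dots,2$, of the rank-$2$ spherical tensor built from $J$. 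Sandwiching with $P$, the rank-$0$ piece is a constant times $P$, and the rank-$1$ pieces are multiples of $P$ by the assumed rank-$1$ conditions. Hence the full set of rank-$2$ conditions is equivalent to the five statements that $M_q := \langle T^2_q\rangle$ is a scalar multiple of the logical identity.

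Next I would extract symmetry constraints. Conjugation by the physical gate $D^j(\Z)$ acts as the rotation by $\pi$ about $\hat z$, sending $T^2_q \mapsto (-1)^q T^2_q$, and conjugation by $D^j(\X)$ acts as the rotation by $\pi$ about $\hat x$, sending $T^2_q \mapsto T^2_{-q}$; the latter one checks on $T^2_{\pm 2}\propto J_\pm^2$, $T^2_{\pm 1}\propto \mp(J_z J_\pm + J_\pm J_z)$, and $T^2_0 \propto 3J_z^2 - J^2$, with no extra sign appearing. By hypothesis both gates preserve the codespace and restrict there to logical $\Z = -iZ$ and logical $\X = -iX$, conjugation by which is conjugation by $Z$ and by $X$. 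Pulling the rotations around $P$ then gives, for every $q$,
\[ Z\,M_q\,Z = (-1)^q M_q, \qquad X\,M_{-q}\,X = M_q . \]
Two further inputs are the Hermiticity identity $(T^2_q)^\dagger = (-1)^q T^2_{-q}$, which yields $M_q^\dagger = (-1)^q M_{-q}$, and the reality hypothesis: in the $\ket{j,m}$ basis the operators $J_z$ and $J_\pm$ are real, so each $T^2_q$ is a real matrix and real codewords make every $M_q$ real.

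It then remains to solve this small linear system. Writing $M_q = a_q I + b_q X + c_q Y + d_q Z$, the $Z$-relation forces $M_0$ and $M_{\pm 2}$ to be diagonal and $M_{\pm 1}$ to be purely off-diagonal. For $q=0$ the $X$-relation reads $X M_0 X = M_0$ and annihilates the $Z$-component. For $q=\pm 2$, diagonality and reality make the coefficients real, while $M_2^\dagger = M_{-2}$ combined with $X M_{-2} X = M_2$ forces $d_2 = -d_{-2} = -d_2$, hence $d_{\pm 2}=0$. For $q=\pm 1$, the Hermiticity and $X$-relations force the $X$-coefficient imaginary and the $Y$-coefficient real, whereas reality forces the reverse, so $M_{\pm 1}=0$. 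In every sector $M_q$ is a multiple of $I$, i.e.\ $\langle T^2_q\rangle \propto P$, and by the first paragraph all rank-$2$ KL conditions follow.

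The bookkeeping of the last paragraph is routine; the step that needs genuine care is fixing the Condon--Shortley phases in the transformation of the $T^2_q$ under the $\pi$-rotations $D^j(\X)$ and $D^j(\Z)$, and checking that in each of the five $q$-sectors the available constraints are exactly sufficient --- in particular that all of the hypotheses (logical $\X$, logical $\Z$, reality, and the rank-$1$ conditions) are genuinely needed.
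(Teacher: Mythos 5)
Your proposal is correct, and it takes a closely related but more finely structured route than the paper's proof. The paper splits the nine products $J_\alpha J_\beta$ into a symmetric part (six operators) and an anti-symmetric part (three commutators); the commutators span the rank-$1$ errors and are dispatched by the rank-$1$ hypothesis, and for the symmetric part the paper proves the single matrix-element identity $\bra{u}J_\alpha J_\beta\ket{v}=(-1)^{u+v}\bra{v+1}J_\beta J_\alpha\ket{u+1}$ by two conjugations (by $Z$, then by $X$) plus reality, from which the KL conditions for all six symmetric errors follow immediately. You instead refine the decomposition into irreducible spherical ranks $0\oplus 1\oplus 2$: rank $0$ is the Casimir and is trivially a multiple of $P$, rank $1$ is again covered by hypothesis, and you then work at the operator level, analyzing the logical $2\times2$ matrices $M_q=P\,T^2_q\,P$ subject to the constraints $Z M_q Z=(-1)^q M_q$, $X M_{-q} X=M_q$, $M_q^\dagger=(-1)^q M_{-q}$, and reality. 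I verified the $\pi$-rotation signs on the $T^2_q$ that you flag as the delicate step; they come out as you state, with no extra phases, and your case-by-case elimination for $q=0,\pm1,\pm2$ is complete and uses each hypothesis exactly once. The paper's argument is shorter and avoids spherical-tensor bookkeeping altogether; yours makes the $\mathrm{SU}(2)$ representation-theoretic structure explicit and makes it transparent which constraint kills which coefficient, which is useful if one wants to see that none of the four hypotheses is redundant.
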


The proof of each of these lemmas is given in the Supplemental Material \cite{supp}. In particular, a $\ico$ spin code with real codewords that transforms in the $\overline{\pi_2}$ irrep will satisfy both of these lemmas.

To be sure, this means that real $(\ico,\overline{\pi_2})$ spin codes have distance $d = 3$ automatically. In other words, these spin codes are deduced \textit{entirely} from symmetry. We can now use \cref{lem:DickeBootstrap} to immediately get a distance $d= 3$ multiqubit code family. 

\begin{theorem}[Family of error-correcting $\ico$ codes]\label{thm:family}
    There is an $((n,2,3))$ multiqubit code that implements $\ico$ transversally for all odd $n$ except $1,3,5,9,11,15,21$.
\end{theorem}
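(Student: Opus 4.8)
The plan is to realize every code in the family as the Dicke image of a spin code, and to collapse the whole statement to a representation-theoretic count. By \cref{lem:DickeBootstrap} it suffices, for each odd $n$ off the exceptional list, to produce a spin $j=n/2$ code of distance $d=3$ that implements $\ico$ with the logical gate of $D^j(g)$ being $\overline{\pi_2}(g)$. By \cref{lem:rank1} and \cref{lem:rank2}, such a code is obtained \emph{for free} as soon as one exhibits a two-dimensional $\ico$-invariant subspace $W$ of the spin-$j$ irrep $V_j$ that (i) transforms in the $\overline{\pi_2}$ irrep, and (ii) carries an orthonormal basis of codewords $\logicalket{0},\logicalket{1}$ with real coefficients in the $\ket{j,m}$ basis, on which $D^j(\X)$ and $D^j(\Z)$ act as logical $\pm\X$ and $\pm\Z$. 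Indeed, rank-$0$ KL is just orthonormality, rank-$1$ KL is automatic by \cref{lem:rank1}, and rank-$2$ KL then follows from \cref{lem:rank2}. So the theorem reduces to: for which odd $n$ does such a $W\subset V_j$ exist?

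Step one is the existence of a $\overline{\pi_2}$ summand. Because $2j=n$ is odd, $D^{n/2}(-I)=-I$, so $D^{n/2}\vert_\ico$ contains only the four spinorial irreps of $\ico$ (dimensions $2,2,4,6$), and using $\chi_{D^{n/2}}(\theta)=\sin((n+1)\theta/2)/\sin(\theta/2)$ together with the character table of $\ico$, the multiplicity $m_n:=\langle\chi_{D^{n/2}},\chi_{\overline{\pi_2}}\rangle$ collapses to a short trigonometric sum over the nine conjugacy classes. One finds $m_n$ is a quasi-polynomial in $n$: a linear leading term $(n+1)/30$ plus a correction periodic of period $30$, so $m_{n+30}=m_n+1$. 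Evaluating the finitely many odd residues $n\le 29$ directly shows $m_n=0$ exactly when $n\in\{1,3,5,9,11,15,21\}$ and $m_n\ge 1$ otherwise; and for odd $n\ge 31$, $m_n=m_{n-30}+1\ge 1$ automatically. This already pins down the exceptional set of the theorem.

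Step two is realness together with the $\X$, $\Z$ structure, and this is where half-integrality of $j$ is essential. Time reversal $\Theta=e^{-i\pi J_y}K_0$ (with $K_0$ coordinate conjugation in the $\ket{j,m}$ basis) commutes with every rotation, hence with $D^j(\ico)$, and $\Theta^2=-I$ since $2j$ is odd; thus $\Theta$ preserves the $\overline{\pi_2}$-isotypic component of $V_j$ and, via the quaternionic structure of $\overline{\pi_2}$, induces a genuine real structure on its multiplicity space. Picking a real point there yields a $\Theta$-invariant copy $W\cong\overline{\pi_2}$. Since $e^{-i\pi J_y}$ is $D^j$ of a Pauli element already in $\ico$ (up to sign and the central factor, the element $\Y=\X\Z$), $\Theta$-invariance of $W$ is equivalent to $K_0W=W$, i.e.\ $W$ has a real-coefficient basis. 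Finally $D^j(\Z)$ is diagonal in the $\ket{j,m}$ basis with eigenvalues $\pm i$, so its two eigenlines inside $W$ are spanned by orthonormal real codewords $\logicalket{0},\logicalket{1}$ on which $D^j(\Z)$ is logical $\pm\Z$; because $\X$ anticommutes with $\Z$ in $\overline{\pi_2}$, $D^j(\X)$ interchanges these lines, and a short check (using that $\X,\Z$ have entries fixed by $\sqrt5\mapsto-\sqrt5$, so they act in $\overline{\pi_2}$ exactly as the matrices $\X,\Z$) shows, after fixing signs, that it is logical $\pm\X$. This verifies (i)--(ii), and \cref{lem:rank1}, \cref{lem:rank2}, \cref{lem:DickeBootstrap} then deliver the $((n,2,3))$ code. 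The main obstacle is the bookkeeping of step two---keeping straight the three antilinear maps in play ($K_0$, $\Theta$, and the Galois conjugation $\sqrt5\mapsto-\sqrt5$) and confirming that the extracted real codewords really make $D^j(\X),D^j(\Z)$ logical Paulis rather than merely a conjugate pair---together with nailing down that the trigonometric sum of step one vanishes on precisely $\{1,3,5,9,11,15,21\}$; the multiplicity computation itself is routine once the character table of $\ico$ is at hand.
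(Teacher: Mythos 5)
Your top-level reduction is the same as the paper's: apply \cref{lem:DickeBootstrap} to a spin-$j$ code, use \cref{lem:rank1} and \cref{lem:rank2} to get distance $3$ for free, and reduce the theorem to the existence, for each non-exceptional odd $n$, of a two-dimensional $\ico$-invariant subspace of $V_{n/2}$ transforming as $\overline{\pi_2}$ and admitting a real $\Z$-eigenbasis. Where you diverge is in \emph{how} that real invariant subspace is produced. The paper's argument (in the supplemental ``Construction of $\ico$ codes'') is more elementary and immediately algorithmic: it observes that $\Pi_\G = \frac{2}{|\G|}\sum_{g}\chi_{\overline{\pi_2}}^*(g)D^j(g)$ is a real symmetric matrix because $\Y\in\ico$ forces $g$ and $g^*$ into the same conjugacy class while $\chi_{\overline{\pi_2}}$ is real-valued, that $\Pi_\Z=\tfrac{1}{2}(\indicator+iD^j(\Z))$ is also real symmetric, and that the two commute -- so a common real eigenvector exists and one simply takes it as $\logicalket{0}$ and sets $\logicalket{1}=iD^j(\X)\logicalket{0}$. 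Your route through $\Theta=e^{-i\pi J_y}K_0$, $\Theta^2=-I$, and the quaternionic structure of $\overline{\pi_2}$ inducing a genuine real structure on the multiplicity space is correct and more conceptual, but it buys you less: it establishes existence without handing you the projector recipe the paper uses to actually write down codewords, and it obliges you to track three antilinear operators whose interaction is exactly the ``bookkeeping'' you flag as the main obstacle. On the exceptional set you are in fact more explicit than the paper, which merely asserts $\overline{\pi_2}$ is absent for $n\in\{1,3,5,9,11,15,21\}$; your period-$30$ quasi-polynomial claim is right (the order-$4$ class contributes $\chi_j=0$ for odd $n$, and the remaining nontrivial classes have eigenvalue orders dividing $30$, so $m_{n+30}=m_n+1$ on the odd integers), and it gives a cleaner reason why the exceptional list is finite. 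Both arguments are sound; yours trades the paper's concreteness for a more structural explanation of where realness comes from.
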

The exceptions are easy to understand. Only odd tensor powers of qubits branch to faithful irreps of $\ico$ and $\overline{\pi_2}$ does not appear in the permutationally invariant subspace for any of the seven odd values of $n$ listed. We give a concrete construction of the codewords in the Supplemental Material \cite{supp}.

It is worth emphasizing that rank-2 errors are satisfied fairly generically. So long as you implement the logical Pauli group and choose your codewords to be real, \cref{lem:rank2} says that you can bootstrap an error \textit{detecting} distance $d = 2$ spin code to an error \textit{correcting} distance $d = 3$ spin code for free. In contrast, the rank-1 error condition from \cref{lem:rank1} was particular to the $\overline{\pi_2}$ irrep of $\ico$. One might wonder if this restriction was unnecessary, e.g., are there any other pairs $(\G, \lambda)$ for which this automatic rank-1 condition is true? The answer is no. Only the binary icosahedral group $\ico$ affords enough symmetry.

\begin{theorem} \label{thm:auto1unique} The automatic rank-1 protection property from \cref{lem:rank1} is \textit{unique} to the pair $(\ico,\overline{\pi_2})$ among all finite subgroups of $ \SU(2) $.  
\end{theorem}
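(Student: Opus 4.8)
The plan is to reduce the ``automatic rank-1 protection'' property of a pair $(\G,\lambda)$ --- with $\G\subseteq\SU(2)$ finite and $\lambda$ a $2$-dimensional irrep under which $\G$ acts faithfully, so that the code genuinely implements $\G$ --- to a single character condition, and then run through the ADE classification. The mechanism is the one already behind \cref{lem:rank1}: the angular momentum operators $J_z,J_\pm$ span a copy of the spin-$1$ representation, so on a spin code they realize the module $\pi_3|_\G$, and for a $\G$-invariant codespace $W\cong\lambda$ the assignment $J_\alpha\mapsto PJ_\alpha P$ (with $P$ the orthogonal projector onto $W$) is a $\G$-intertwiner $\pi_3|_\G\to\mathrm{End}(W)$. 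Decomposing $\mathrm{End}(W)=\C I_W\oplus\mathrm{End}_0(W)$ into scalars and the $3$-dimensional traceless part (whose character is $|\chi_\lambda|^2-1$), the rank-$1$ conditions \cref{eqn:KLspin} hold for $W$ exactly when this intertwiner lands in $\C I_W$. By Schur's lemma, automatic rank-$1$ protection --- the demand that \emph{every} $(\G,\lambda)$ spin code obey the rank-$1$ conditions --- is then equivalent to
\[
\mathrm{Hom}_\G\!\big(\pi_3|_\G,\ \mathrm{End}_0(W)\big)=0 ,
\]
i.e.\ to $\pi_3|_\G$ and $\mathrm{End}_0(W)$ having no irreducible constituent in common; note this depends only on $(\G,\lambda)$, not on the spin $j$. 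The ``$\Leftarrow$'' direction is Schur, verbatim as in \cref{lem:rank1}; the ``$\Rightarrow$'' direction requires, whenever the two characters overlap, producing an explicit spin $j$ and invariant $W\cong\lambda$ inside $\pi_{2j+1}|_\G$ on which $PJ_\alpha P$ is not scalar --- arranged by taking $j$ large enough that the relevant $\pi_3\otimes\lambda\to\lambda$ coupling is actually witnessed in $\pi_{2j+1}|_\G$.

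Granting the reformulation, the theorem becomes a finite check. Cyclic subgroups have no $2$-dimensional irrep and are out. For $\G\in\{\tet,\oct\}$ every faithful $2$-dimensional irrep is $\lambda=\lambda_0\otimes\chi$ with $\lambda_0$ the defining irrep and $\chi$ one-dimensional; since $\chi\otimes\chi^{*}$ is trivial, $\mathrm{End}_0(W)\cong\lambda_0\otimes\lambda_0^{*}\ominus\mathbf{1}\cong\mathrm{Sym}^2\lambda_0=\pi_3|_\G$, the irreducible standard $3$-dimensional representation of $A_4$, resp.\ $S_4$. The overlap is thus maximal and every such pair fails; in particular $\lambda=\lambda_0$ always fails, for any $\G$.

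The remaining cases are the binary dihedral groups and $\ico$. For $2D_n$ ($n\ge 3$) the faithful $2$-dimensional irreps are the $\rho_k$ with $k$ odd and $\gcd(k,n)=1$, and a short computation on the two generators gives $\mathrm{End}_0(W)\cong\mathrm{Sym}^2\rho_k\cong\rho_{2k}\oplus\mu$, where $\mu$ is the \emph{same} one-dimensional representation for every such $k$ (trivial on the cyclic generator, $-1$ on the reflection); since $\pi_3|_{2D_n}\cong\rho_2\oplus\mu$ as well, the constituent $\mu$ is always shared and the pair fails. (The degenerate cases $2D_1\cong C_4$ and $2D_2\cong Q_8$ have at most one faithful $2$-dimensional irrep and reduce to the previous remark.) This leaves $\ico$. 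Here $\ico=2A_5$ is perfect, so it has only the trivial one-dimensional character and the $\lambda_0\otimes\chi$ trick yields nothing new; its unique non-defining faithful $2$-dimensional irrep is $\overline{\pi_2}$, obtained from $\pi_2$ by the Galois automorphism $\sqrt5\mapsto-\sqrt5$. Because $\mathrm{Sym}^2$ commutes with Galois conjugation, $\mathrm{End}_0(W)\cong\mathrm{Sym}^2\overline{\pi_2}$ is the Galois conjugate of $\mathrm{Sym}^2\pi_2=\pi_3|_\ico$; but the two $3$-dimensional irreps of $\ico$ are genuinely Galois-inequivalent --- their characters on the two order-$10$ classes are $\varphi$ and $1-\varphi$ --- so $\mathrm{Sym}^2\overline{\pi_2}\not\cong\pi_3|_\ico$, and being distinct irreducibles they share no constituent. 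Hence $\mathrm{Hom}_\ico(\pi_3|_\ico,\mathrm{End}_0(W))=0$ and $(\ico,\overline{\pi_2})$ passes, while $(\ico,\pi_2)$ fails as above; so $(\ico,\overline{\pi_2})$ is the unique pair with the property.

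The step I expect to be the real obstacle is the ``$\Rightarrow$'' half of the reformulation: certifying that a character overlap genuinely obstructs rank-$1$ protection rather than being an artifact of the abstract Schur count. The cleanest way to settle it is to exhibit, for each failing family, an explicit small-spin code violating a specific rank-$1$ KL equation --- which also serves as an independent check on the character bookkeeping. A lesser nuisance is that the binary dihedral case is an infinite family, so the identity $\mathrm{Sym}^2\rho_k\cong\rho_{2k}\oplus\mu$ and its consequence must be argued uniformly in $n$. Conceptually, the proof makes clear why the phenomenon is special: in every other case the overlap is forced, but for family-specific reasons --- for $\tet$ and $\oct$ the extra faithful $2$-dimensional irreps are character twists of the defining one and so have the same $\mathrm{Sym}^2$; for the binary dihedral groups every $\mathrm{Sym}^2\rho_k$ carries the fixed one-dimensional constituent $\mu$ that also sits inside $\pi_3$. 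Only for $\ico$, which being perfect has no nontrivial character to twist by, is the second faithful $2$-dimensional irrep instead a Galois conjugate of the defining one, and the two Galois-conjugate $3$-dimensional irreps of $\ico$ have empty intersection, so $\mathrm{Sym}^2\overline{\pi_2}$ and $\mathrm{Sym}^2\pi_2$ share nothing. This coincidence --- two faithful $2$-dimensional irreps with disjoint symmetric squares --- is the ``extra symmetry'' of $\ico$ that the statement refers to.
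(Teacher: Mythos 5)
Your proof is correct, and it takes a genuinely different route from the paper's. The paper's argument is essentially computational: it evaluates the quantity $\widetilde{\Gamma}_{(\G,\lambda)}(g',E)$ from \cref{eqn:calculation} (the representation-independent reduction used to prove \cref{lem:rank1}) for the finitely many faithful $2$-dimensional irreps of $\tet$, $\oct$, $\ico$, observes by computer that it is not identically zero except for $(\ico,\overline{\pi_2})$, and then asserts ``the same is true'' for the binary dihedral family without further argument. Your approach instead reformulates the automatic-protection property as a clean character condition, $\mathrm{Hom}_\G\big(\pi_3|_\G,\ \lambda\otimes\lambda^*\ominus\mathbf{1}\big)=0$, and then disposes of each ADE family by pure representation theory: for $\tet,\oct$ the twist by a linear character leaves $\lambda\otimes\lambda^*$ unchanged so $\mathrm{End}_0(W)\cong\pi_3|_\G$ always; for the binary dihedral family a fixed one-dimensional constituent $\mu$ sits in every $\mathrm{End}_0(\rho_k)$ and in $\pi_3$; and for $\ico$, being perfect, the only non-defining faithful $2$-dim irrep is the Galois conjugate $\overline{\pi_2}$, whose symmetric square $\overline{\pi_3}$ is an irreducible \emph{distinct from} $\pi_3$. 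This buys you a uniform treatment of the infinite dihedral case (which the paper glosses over) and a structural explanation of why only $\ico$ works, at the price of the one subtlety you already flag: the ``$\Rightarrow$'' half of the reformulation requires showing that a nonzero $\mathrm{Hom}$ space is actually \emph{witnessed} by some $P J_\alpha P$ at some spin $j$. That gap is real but mild (and the paper's version has an analogous soft spot, since a nonvanishing $\widetilde{\Gamma}$ does not instantly produce a failing codeword either); it can be closed either by the paper's observation that $\Gamma^j=\rho^j(\widetilde\Gamma)$ is nonzero for every $j>0$ whenever $\widetilde\Gamma\neq0$, or as you suggest by exhibiting one small-$j$ counterexample per family. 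One small caution on bookkeeping: since $\lambda$ need not be self-dual for $\tet$ (its linear characters have order $3$), be sure to work with $\lambda\otimes\lambda^*$ rather than $\mathrm{Sym}^2\lambda$ in that case, as you in fact do.
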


\section{Conclusion}

In this paper, we have constructed codes that implement the binary icosahedral group $\ico$ transversally, thereby completing the first half of a practical implementation of the fast icosahedral navigation algorithm.

One notable consequence of our search for a $ \ico $ code
was the discovery that $ \ico $ codes satisfying certain transversality properties  are automatically guaranteed to be error-correcting (permutationally invariant real codes transforming in the $\overline{\pi_2}$ irrep). This is the first time that the error-correcting properties of a code have been deduced purely from transversality considerations. This suggests a deep connection between transversality and error correction which requires further investigation. 

Most of the demonstrated advantage of non-additive codes over stabilizer codes has been confined to marginal improvements in the parameter $ K $ relative to fixed $ n $ and $ d $. However, our $ \ico $ code family motivates the study of nonadditive codes from a different, and much stronger, perspective. Namely, we show that nonadditive codes can achieve transversality properties which are forbidden for any stabilizer code.

\section{Acknowledgments} 
We thank J. Maxwell Silvester for helpful conversations, Michael Gullans for suggesting the use of weight enumerators in our original computerized search for the $ 7 $ qubit $ \ico $ code, Victor Albert for pointing out the relevance of the work of \cite{gross1,gross2}, Jonathan Gross for helping us rectify the misunderstandings of \cite{gross1,gross2} present in the first version of this manuscript, Sivaprasad Omanakuttan for helpful conversations in understanding the work of \cite{gross2}, Markus Heinrich for helpful conversations regarding the exoticness of the $ \Phi $ gate, and Mark Howard for helpful conversations regarding the facet gate. 

The authors acknowledge the University of Maryland supercomputing resources made available for conducting the research reported in this paper. All figures were drawn with \textsc{Mathematica 13.2}.

This research was supported in part by NSF QLCI grant OMA-2120757.

\nocite{gottesman1997stabilizer,
magicstatedist,
gidney2021stim,
faceM,
face1,
face2,
face3,
semiClifford,
diagonalCliffordHierarchy,
GF4codes,
Shadows,
schurweyl,
codeswitching,
smallestT}

\bibliography{biblio}

\ 
\newpage 
\appendix 

\makeatletter
\renewcommand{\theequation}{S\arabic{equation}}
\renewcommand{\thetable}{S\arabic{table}}
\renewcommand{\thefigure}{S\arabic{figure}}
\renewcommand{\thelemma}{S\arabic{lemma}}
\setcounter{table}{0}
\setcounter{figure}{0}
\setcounter{lemma}{0}
\setcounter{equation}{0}

\ \newpage 
\section{\large Supplemental Material}

\section{Discussion of the Facet Gate}\label{sec:facet} 
The defining feature of the gate we have dubbed $F$ is that its conjugation action on the Paulis is by cycling:
\[
\X \to \Y, \quad \Y \to \Z, \quad \Z \to \X. \numberthis
\] 
This gate used to be referred to as the ``$T$" gate (cf. \cite{gottesman1997stabilizer,magicstatedist}) but nowadays $T$ is generally reserved for the $\pi/8$  gate (see \cref{tab:U2SU2gates}). 

In Stim \cite{gidney2021stim}, $F$ is referred to as $C_{xyz}$ and called an ``axis cycling gate" but unfortunately, the letter ``$C$" has quite a bit of overlap with other terms in the quantum information literature. We referred to $F$ as ``$M$" in earlier versions in this manuscript because \cite{faceM} used ``$M_3$" for this gate. 

On the other hand, it seems that in some recent contexts, \cite{face1,face2,face3}, $F$ has been called either the ``facet" or ``face" gate since $F$ is among $8$ gates that are Lie-generated by the vectors orthogonal to the $8$ facets/faces of the stabilizer polytope. We prefer ``facet" since ``face" and ``phase" (the $S$ gate) are near-homophones.

\section{Finite Subgroups of $ \SU(2) $}

\newcommand{\Ad}{\mathrm{Ad}}

In the adjoint representation, $ \SU(2) $ acts by conjugation on its three dimensional Lie algebra $\mathfrak{su}(2)$ 
giving the \textit{adjoint map}
\[
\Ad: \SU(2) \mapsto \SO(3). \numberthis 
\]
This map is ``2-to-1" in the sense that both $ \pm g \in \SU(2) $ have the same conjugation action and so map to the same element of $\SO(3)$. The map $\Ad$ assigns each $g \in \SU(2)$ to a $ 3 \times 3 $ matrix whose columns are determined by the action on the three Paulis $ X,Y,Z $. For example, conjugation by the facet gate $ \F $ sends $ X \to Y$, $Y \to Z$, and $Z \to X $, so $ \Ad(\F)= \smqty(0 & 0 & 1 \\ 1 & 0 & 0 \\0 & 1 & 0 ) $. Taking the inverse image under the adjoint map gives an order two lift that associates every finite subgroup $ \G \subset \SO(3) $ to a subgroup $ \Ad^{-1}(\G) \subset \SU(2) $ of twice the size.

\begin{table}[htp]
    \centering 
    \begin{tabular} {lll|lll}
    \toprule 
     \multicolumn{3}{c}{$\SO(3)$} & \multicolumn{3}{c}{$\SU(2)$} \\ \midrule 
     $ \mathrm{C}_n $ & $ \Ad( \Ph\qty(\tfrac{2 \pi}{n}))$ & $n$ & $ \mathrm{C}_{2n} $ &  $ \Ph(\tfrac{2 \pi}{n}) $ & $2n$ \\
     $ \mathrm{Dih}_n $ & $ \Ad(\X), \Ad(\Ph(\tfrac{2 \pi}{n}))$ & $2n$ & $ \mathrm{BD}_n $ &  $ \X, \Ph(\tfrac{2 \pi}{n}) $ & $4n$ \\
     $ \mathrm{A}_4 $ & $ \Ad(\X), \Ad(\Z), \Ad(\F)$ & $12$ & $ \tet  \cong \SL(2,3) $ &  $ \X, \Z, \F $ & $24$ \\ 
       $ \mathrm{S}_4 $ & $ \Ad(\H), \Ad(\S) $ & $24$ & $ \oct $ &  $ \X, \Z, \F $ & $48$ \\ 
       $ \mathrm{A}_5 $ & $ \Ad(\F), \Ad(\Phi) $ & $60$ & $ \ico \cong  \SL(2,5) $ &  $ \F, \Phi $ & $120$ \\  \bottomrule 
    \end{tabular}
    \caption{Name, generating set, and order for each finite subgroup of $ \SO(3) $ and for the corresponding even order subgroup of $ \SU(2) $ }
    \label{tab:finitesubgroupsSO3andSU2}
\end{table}

Some common notation for finite groups is used here: $ \mathrm{C}_n $ are cyclic groups, $ \mathrm{Dih}_n $ are dihedral groups, $ \mathrm{BD}_n $ are binary dihedral groups, $ \mathrm{A}_4 $ and $ \mathrm{A}_5 $ are alternating groups, $ \mathrm{S}_4 $ is a symmetric group, and $ \SL(2,p) $ is the group of $ 2 \times 2 $ determinant $ 1 $ matrices with entries from $ \mathbb{F}_p $, the field with $ p $ elements.

Each row of the table corresponds to the rotational symmetries of some rigid body. The last three rows of \cref{tab:finitesubgroupsSO3andSU2} correspond to the tetrahedron, octahedron, and icosahedron respectively.

\section{Proof of \cref{lem:exotic}}\label{sec:proofexotic}

\begin{proof}
All gates in the single qubit Clifford hierarchy are semi-Clifford \cite{semiClifford}. Let $ \mathsf{U} $ be a (determinant $ 1 $) semi-Clifford gate. Then we can write $\mathsf{U}= \mathsf{V}_1 \mathsf{D} \mathsf{V}_2$ where $ \mathsf{D} $ is a (determinant $ 1 $) diagonal single qubit gate and $ \mathsf{V}_1, \mathsf{V}_2 $ are in the (special) Clifford group $ \mathsf{C}$.
Suppose that $ \mathsf{U} \in \mathsf{C}^{(r)} $, then we must have $ \mathsf{D} \in \mathsf{C}^{(r)} $, by Proposition 3 of \cite{semiClifford}.  

Then \cite{diagonalCliffordHierarchy} shows that $ \mathsf{D} \in \mathsf{C}^{(r)} $ has all diagonal entries some power of the root of unity $ \zeta_{2^{r+1}}=e^{2 \pi i/2^{r+1}} $. All entries of $ \mathsf{V}_1, \mathsf{V}_2 $ are in the cyclotomic field $ \mathbb{Q}(\zeta_8)=\mathbb{Q}(\sqrt{2},i)$. Thus any (determinant 1) single qubit gate in $ \mathsf{C}^{(r)} $ must have all its entries in $ \Q(\zeta_{2^{r+1}}) $. However $ \sqrt{5} $ is not in any $ \Q(\zeta_{2^{r+1}}) $. The result follows.
\end{proof}

\section{Most gates in $\ico$ are exotic} \label{sec:exoticproof96}

\begin{proof}
The left cosets of $\tet$ in $\ico$ are $g \cdot \tet$ for $g \in \ico$. These form a partition of $\ico$ and can be labeled by five representatives since $|\ico|/|\tet| =  120/24 = 5$. One choice of the five coset representatives is $\indicator$, $\mathsf{\Phi}$, $\mathsf{\Phi}^2$, $\mathsf{\Phi}^3$, and $\mathsf{\Phi}^4$, since $\mathsf{\Phi}^5 = -\indicator$. The 24 elements in the identity coset $\indicator \cdot \tet$ are in the Clifford group while the $96$ elements in the other four cosets are clearly outside the Clifford hierarchy because each contains an entry that includes a $\sqrt{5}$.
\end{proof}

\section{$\ico$ Character Table}\label{sec:2Itable}

The character table for $\ico$ is given in \cref{tab:chartable}. The irrep $ \pi_i $ denotes the restriction to $ \ico $ of the unique $ i $ dimensional irrep of $ \SU(2) $. An overbar denotes the image of an irrep under the outer automorphism of $ \ico $. The outer automorphism acts on characters by taking $ \sqrt{5} \to -\sqrt{5} $, and equivalently $\varphi \to -\varphi^{-1} $. We use primes to distinguish other irreps of the same dimension. In all cases, the subscript of the irrep is the dimension.

\begin{table}[htp]
    \centering
    \begin{tabular}{cccccccccc} \toprule 
   Class & $[\indicator]$ & $[-\indicator]$ & $[\X,\Y,\Z]$ & $[\F]$ & $[-\F]$ & $[\mathsf{\Phi}]$ & $[\mathsf{\Phi}^2]$ & $[\mathsf{\Phi}^3]$ & $[\mathsf{\Phi}^4]$ \\
    Size  & 1 & 1 & 30 & 20 & 20 & 12 & 12 & 12 & 12 \\ 
   Order  & 1 & 2 & 4 & 3 & 6 & 10 & 5 & 10 & 5 \\ \midrule 
    $\pi_1$ & 1 & 1 & 1 & 1 & 1 & 1 & 1 & 1 & 1 \\ 
    $\pi_2$ & 2 & -2 & 0 & 1 & -1 &  $\varphi$ & $\varphi^{-1}$ & $-\varphi^{-1}$ & $-\varphi$ \\
    $\overline{\pi_2 }$ & 2 & -2 & 0 & 1 & -1 & $-\varphi^{-1}$ & $-\varphi$ & $\varphi$ & $\varphi^{-1}$   \\ 
    $\pi_3$ & 3 & 3 & -1 & 0 & 0 &  $\varphi$ &  $-\varphi^{-1}$ & $-\varphi^{-1}$ &  $\varphi$  \\ 
     $\overline{\pi_3}$ & 3 & 3 & -1 &  0 & 0 &  $-\varphi^{-1}$ & $\varphi$ & $\varphi$ & $-\varphi^{-1}$ \\ 
     $\pi_4$ &  4 & -4 & 0 & -1 & 1 & 1 & -1 &  1 & -1  \\ 
     $\pi_{4'}$ & 4 & 4 & 0 & 1 & 1 & -1 & -1  & -1 & -1 \\ 
     $\pi_5$ & 5 & 5 & 1 & -1 & -1 & 0 & 0 &  0 & 0\\ 
    $\pi_6$ & 6 & -6 & 0 & 0 & 0 & -1 & 1 & -1 & 1 \\  \bottomrule 
    \end{tabular}
    \caption{$2 I$ Character Table,  $[g]$ is the conjugacy class containing the element $g$}
    \label{tab:chartable}
\end{table}

There are only two irreps with dimension $2$: the fundamental representation $\pi_2$, and the closely related irrep $\overline{\pi_2}$, which is just $ \pi_2 $ permuted by an outer automorphism. Any single qubit ($K=2$) code that implements $\ico$ strongly transversally must transform in one of these two irreps.

\section{Non-Additive Codes}

Most of today's quantum error correction (QEC) uses \textit{stabilizer} codes. The codespace of a stabilizer code is the simultaneous $ +1 $ eigenspace of a subgroup of the $n$-qubit Pauli group. In the early days of QEC, it was shown \cite{GF4codes} that stabilizer codes were related to ``additive" classical codes over $\mathrm{GF}(4)$, the finite field of 4 elements. Here additive means closed under addition (as opposed to linear which means closed under both addition and scalar multiples). For this reason, codes that could not be realized as stabilizer codes were dubbed ``non-additive" and the nomenclature seems to have stuck (even though ``non-stabilizer codes" might be preferred).  

The first error detecting ($d=2$) non-additive code was discovered by Rains, Hardin, Shor, and Sloane in 1997 \cite{nonadditive}. Later in the same year, Roychowdhury and Vatan discovered the first error correcting ($d=3$) non-additive code \cite{roychowdhury1997structure}. 


The codewords of a stabilizer code are stabilizer states, and moreover can can always be brought into a form for which the coefficients with respect to the computational basis are all $\pm 1$ (up to a uniform normalization). However, not all codes with codewords of this form are stabilizer codes. These types of codes are called \textit{codeword stabilized codes} (CWS) \cite{CWS}. In some ways, these are the ``least" non-additive codes. CWS codes are equivalent to graphical quantum codes \cite{graphical} and this approach was used in \cite{nonadditive2} to find a non-additive error correcting code that outperformed the best known stabilizer code of the same length. 

Non-additive codes that are not CWS are not well understood, except in a few specific cases. One example is the $XS$ and $XP$ formalism in which the stabilizer group is generalized to include certain diagonal non Pauli gates \cite{XScodes,XPcodes}. It would be interesting to see if the codes we developed in the main text could be formulated as either $XS$ or $XP$ stabilizer codes (we suspect not since $\ico$ doesn't even have an $\S$ transversal gate). 

In a completely different direction, there are the GNU and shifted-GNU codes \cite{gnuCodes,gnuCodesShifted}. Similar to the codes in this paper, GNU type codes are permutationally invariant and thus are necessarily non-additive by the work of \cite{automorph}. Recently, a formalism has been described that allows the efficient decoding of any permutationally invariant codes \cite{ouyangPIcodesQECC}. Both the GNU codes and the codes in this paper fall under this formalism.

\section{Proof of \cref{thm:smallest}}\label{sec:proofsmallest}

\begin{proof} 
The irreps of $\SU(2)$ are labelled by $j$, where $j$ is either integral or half-integral. The half-integral irreps are faithful and the integral irreps are not faithful. 

A qubit is spin $\tfrac{1}{2}$ and so $n$ qubits live in the Hilbert space $\tfrac{1}{2}^{\otimes n}$. This reducible representation of $\SU(2)$ splits into a direct sum of irreps of $\SU(2)$. When $n$ is even it splits into only integral spin irreps (non-faithful) and when $n$ is odd it splits into only half-integral spin irreps (faithful).

If a code is strongly $\ico$-transversal then it must live in either a $\pi_2$ or $\overline{\pi_2}$ irrep within $\frac{1}{2}^{\otimes n}$. Because both $\pi_2$ and $\overline{\pi_2}$ are faithful, they can only exist in tensor powers for odd $n$. 

It is well known that there are no non-trivial codes for $n \leq 3$ qubits \cite{GF4codes,Shadows}. Given the above restriction on the parity of $n$, the first time a non-trivial strongly $\ico$-transversal code could appear is in $5$ qubits. The Hilbert space of $5$ qubits branches into irreps of $\ico$ as
\[
    \tfrac{1}{2}^{\otimes 5} = \underbrace{\pi_6}_{5/2} + \underbrace{4 \pi_4}_{3/2} + \underbrace{5 \pi_2}_{1/2},\numberthis
\]
where we have decomposed with respect to spin using the Schur-Weyl duality \cite{schurweyl}. We first notice there is no $\overline{\pi_2}$ copy. Moreover, we see that every $\pi_2$ lives entirely within the spin $1/2$ irrep. This means that these are actually representations of the entire group $\SU(2)$ and so are $\SU(2)$ strongly transversal. But that means there are infinitely many transversal gates so by the Eastin-Knill theorem these must be trivial codes ($ d=1 $).

In $7$ qubits the branching is
\[
    \tfrac{1}{2}^{\otimes 7} = \underbrace{\overline{\pi_2} + \pi_6}_{7/2} + \underbrace{6 \pi_6}_{5/2} + \underbrace{14\pi_4}_{3/2} + \underbrace{14 \pi_2}_{1/2}. \numberthis
\]
Again one can see that all $\pi_2$ irreps are within the spin $\frac{1}{2}$ irrep of $\SU(2)$ and so are $\SU(2)$ strongly transversal and must be trivial codes. On the other hand, there is a unique $\overline{\pi_2}$ within the totally symmetric subspace (spin $7/2$). This is the $((7,2,3))$ code from \cref{code:us}. This proves existence, uniqueness, and minimality. 
\end{proof}

\subsection{Relation to Previous Work}

\emph{Relation to \cite{2004permutation}.---}
The $((7,2,3))$ $\ico$ code is a linear combination of Dicke states and so it is permutationally invariant. As a result, the relevance of the work in \cite{2004permutation} came to our attention. The authors display two permutationally invariant $((7,2,3))$ codes, both of which are equivalent via non-entangling gates to the code in \cref{code:us}. In fact, the codes therein were the first examples noticed in \cite{CWS} of non-additive codes that are not CWS.

\emph{Relation to \cite{gross1}.---} Applying \cref{lem:DickeBootstrap} to the $ j=7/2$,  $\G=\ico $ spin code constructed in \cite{gross1} gives a multiqubit code which is equivalent via non-entangling gates to the code in \cref{code:us}. 

\emph{Relation to \cite{gross2}.---} Ideas related to how we constructed the $\ico$ family were used in \cite{gross2} to construct $\oct$ transversal multiqubit codes. However, the results of \cref{thm:auto1unique} show that \cref{lem:rank1} (satisfying rank-1 errors automatically) is completely unique to $\ico$.

\section{Proof of \cref{lem:DickeBootstrap}} \label{app:dicke}

\emph{Logical Gates.---} Let's start by proving the intertwining relation found in \cref{eqn:dickecovar}. If we can show this equation is true at the Lie algebra level then it will automatically be true at the Lie group level via exponentiation. A convenient basis for the complexified algebra $\mathfrak{su}_\C(2)$ is $(J_+, J_-, J_z)$ where $J_z$ is the $z$-component of angular momentum and $J_\pm = J_x \pm i J_y$ are ladder operators. It thus suffices to show
\[
    \D\qty[ J_\alpha \ket{j,m} ] = \sum_{i=1}^n J_\alpha^{(i)} \D \ket{j,m}, \numberthis \label{eqn:dickebasis}
\]
where $ J_\alpha$ is either $J_+$, $J_-$, or $J_z$ and $J_\alpha^{(i)}$ is $J_\alpha$ on the $i$-th qubit.

Let's start with the $J_z$ condition. On the left side of \cref{eqn:dickebasis} we have
\begin{align}
    \D \qty[ J_z \ket{j,m} ] &= m \D \ket{j,m} =m \ket*{D^{2j}_{j-m}}. \label{eqn:dickeproof1}
\end{align}
where we have used the fact that $J_z \ket{j,m} = m \ket{j,m}$. On the right hand side of \cref{eqn:dickebasis} we have
\begin{align}
    \sum_{i=1}^n J_z^{(i)} \D \ket{j,m} &= \tfrac{1}{2}\sum_{i=1}^n Z^{(i)} \ket*{D^{2j}_{j-m}} \\
    &= \tfrac{-(j-m) + (j+m)}{2} \ket*{D^{2j}_{j-m}} \label{eqn:dickeproof2} \\ 
    &= m \ket*{D^{2j}_{j-m}}.
\end{align}
In \cref{eqn:dickeproof2} we have used the fact that there are $(j-m)$ many $\ket{1}$ kets and $2j-(j-m)=j+m$ many $\ket{0}$ kets. Then we used the fact that $Z\ket{0} = \ket{0}$ and $Z\ket{1} = - \ket{1}$. This calculation and \Cref{eqn:dickeproof1} prove \cref{eqn:dickebasis} for $J_z$. 

Now let's move on to the ladder operators $J_\pm$. To start, recall \cite{sakurai}
\[
J_\pm \ket{j,m} = \sqrt{(j \mp m)(j\pm m+1)} \ket{j,m\pm 1}. \numberthis \label{eqn:ladders}
\]
Then the left hand side of \cref{eqn:dickebasis} is
\begin{align}
    \D\qty[ J_\pm \ket{j,m}] &= \sqrt{(j \mp m)(j\pm m+1)} \D \ket{j,m\pm 1} \\
    &= \sqrt{(j \mp m)(j\pm m+1)} \ket*{D^{2j}_{j-m \mp 1}} \label{eqn:dickeproof3}
\end{align}

Now let's consider the right hand side of \cref{eqn:dickebasis}. In the spin $1/2$ case
\[
    J_+ = \smqty(0 & 1 \\ 0 & 0), \quad J_- = \smqty(0 & 0 \\ 1 & 0). \numberthis
\]
However, because $\ket{0} = \ket*{\tfrac{1}{2},\tfrac{1}{2}}$ and $\ket{1} = \ket*{\tfrac{1}{2},-\tfrac{1}{2}}$ then counter-intuitively we have $J_+ \ket{0} = 0$, $J_+ \ket{1} = \ket{0}$, $J_- \ket{0} = \ket{1}$, and $J_- \ket{1} = 0$. To compute the right hand side of \cref{eqn:dickebasis} we need the following lemma.
\begin{lemma} \label{lem:rightside}
\begin{align}
    \sum_i J^{(i)}_-  \ket*{D^n_w} &= \sqrt{(n-w)(w+1)} \ket*{D^n_{w+1}}, \\
    \sum_i J^{(i)}_+  \ket*{D^n_w} &= \sqrt{w(n-w+1)} \ket*{D^n_{w-1}}.
\end{align}
\end{lemma}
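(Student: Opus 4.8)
The plan is to prove both identities by expanding the Dicke states in the computational basis and keeping track of the multiplicity with which each output string is produced. The only input needed is the single-qubit action of the ladder operators: because $\ket{0}=\ket*{\tfrac12,\tfrac12}$ and $\ket{1}=\ket*{\tfrac12,-\tfrac12}$, the operator $J_-^{(i)}$ annihilates a basis string whose $i$-th bit is $1$ and flips the $i$-th bit from $0$ to $1$ otherwise, while $J_+^{(i)}$ annihilates a string whose $i$-th bit is $0$ and flips the $i$-th bit from $1$ to $0$ otherwise.

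First I would write $\ket*{D^n_w} = \binom{n}{w}^{-1/2}\sum_{|x|=w}\ket{x}$ and apply $\sum_i J_-^{(i)}$ term by term: from each weight-$w$ string $x$ we get the $n-w$ weight-$(w+1)$ strings obtained by raising one of its zero-bits. The key step is to re-index this double sum by the \emph{output} string $y$, with $|y|=w+1$: a given $y$ is produced exactly once for each of its $w+1$ one-bits (lower that bit to recover the unique weight-$w$ preimage, then raise it again), so every weight-$(w+1)$ string occurs with multiplicity $w+1$. Hence $\sum_i J_-^{(i)}\ket*{D^n_w} = (w+1)\,\binom{n}{w+1}^{1/2}\binom{n}{w}^{-1/2}\ket*{D^n_{w+1}}$, and the identity $\binom{n}{w+1}/\binom{n}{w}=(n-w)/(w+1)$ collapses the prefactor to $\sqrt{(n-w)(w+1)}$. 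The $J_+$ statement follows by the mirror argument: $\sum_i J_+^{(i)}$ sends each weight-$w$ string to $w$ weight-$(w-1)$ strings, each weight-$(w-1)$ output is hit with multiplicity equal to its number of zero-bits, namely $n-w+1$, and $\binom{n}{w-1}/\binom{n}{w}=w/(n-w+1)$ yields the prefactor $\sqrt{w(n-w+1)}$.

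There is no substantial obstacle here; the whole argument is combinatorial bookkeeping, and the only points requiring care are getting the output multiplicities ($w+1$ and $n-w+1$, respectively) and the two binomial-coefficient ratios correct. For completeness I would remark that one could instead invoke the standard fact that the permutationally invariant subspace of $n$ qubits carries the spin-$n/2$ irrep and read the two formulas off \cref{eqn:ladders} directly; but since \cref{lem:DickeBootstrap} is exactly what pins down that identification, the self-contained counting proof above is the cleaner route and avoids any circularity.
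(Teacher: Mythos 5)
Your proof is correct and follows essentially the same route as the paper: expand the Dicke state in the computational basis, apply the ladder operators term by term, and count the multiplicity with which each output string occurs. The only cosmetic difference is that you obtain the multiplicity $w+1$ by directly identifying, for each output string $y$, its $w+1$ preimages (one per one-bit of $y$), whereas the paper counts the total $n\binom{n-1}{w}$ terms and divides by $\binom{n}{w+1}$ using a symmetry argument; both are the same combinatorial bookkeeping.
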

\begin{proof}
    A Dicke state can be expanded as
    \[
        \ket*{D^n_{w}} = \tfrac{1}{\sqrt{\binom{n}{w}} }\sum_{ wt(s) = w } \ket{s}.
    \]
     The sum is over all length $n$ bit strings of Hamming weight $w$.
    Then
    \begin{align}
        \sum_{i=1}^n J^{(i)}_-  \ket*{D^n_w} &=  \tfrac{1}{\sqrt{\binom{n}{w}} } \sum_{i=1}^n \sum_{ wt(s)=w}  J^{(i)}_- \ket{s} \\
        &\explainequals{1} \tfrac{1}{\sqrt{\binom{n}{w}} } \sum_{i=1}^n \sum_{\substack{wt(s')=w+1 \\ s'_i = 1}  }  \ket{s'}  \\
        &\explainequals{2}  \tfrac{(w+1)}{\sqrt{\binom{n}{w}} } \sum_{wt(s')=w+1 }  \ket*{s'} \\
        &= \sqrt{(n-w)(w+1)}\ket*{D^n_{w+1}}
    \end{align}
    Line (1) follows from the fact that $ J_-^{(i)} $ annihilates any $ \ket{s} $ with $ s_i=1 $, while if $ s_i=0 $, then $ J_-^{(i)} $ just changes $ s_i $ to a $ 1 $.
     For line (2), note that the sum $ \sum_{i=1}^n \sum_{wt(s')=w+1, s'_i = 1  } $ is over exactly $ n \binom{n-1}{w} $ terms, and every weight $ w+1 $ bit string  $ s' $ appears in the sum the same number of times. Since there are exactly $ \binom{n}{w+1} $ many weight $ w+1 $ bit strings, and $ n \binom{n-1}{w}=(w+1) \binom{n}{w+1} $, that accounts for the factor of $ w+1 $ in line (2).

    The proof of $J_+$ is similar except the compensatory quantity is $(n-w+1)$ instead of $w+1$, stemming from the identity $ n \binom{n-1}{w-1}=(n-w+1) \binom{n}{w-1} $.
   
\end{proof}
If we plug in $n = 2j$ and $w = j-m$ to \cref{lem:rightside} then we can compute the right hand side of \cref{eqn:dickebasis} and see that it agrees with \cref{eqn:dickeproof3} as desired. This concludes the proof that logical gates are preserved under $\D$.

\emph{Distance.---}  Let $E$ be a Pauli string. There is a natural action of the permutation $\sigma \in \mathrm{S}_n$ (the symmetric group on $n$ letters) on $E$ by permutation of the tensor factors. For example, if $E = XZI$ then $(123) \cdot E = IXZ$. For a general operator $A$, we define the action of $\sigma \in \mathrm{S}_n$ on $A$ by $\sigma \cdot A = P_\sigma^{\dagger} A P_\sigma$ where $P_\sigma$ is the $ 2^n \times 2^n $ permutation matrix corresponding to permuting the $ n $ tensor factors by the permutation $\sigma \in \mathrm{S}_n $.

Let's start with the following lemma regarding permutationally invariant multiqubit codes. We call $ E $ a permutationally invariant error of weight $ w $ if $ \sigma \cdot E = E $ for all $ \sigma \in \mathrm{S}_n $ and $ E $ is a linear combination of weight $ w $ Pauli errors.


\begin{lemma}\label{lem:perm} If a permutationally invariant multiqubit code satisfies the KL conditions for a basis of permutationally invariant errors of weight $ w $, then the KL conditions are satisfied for \textit{all} errors of weight $ w $.
\end{lemma}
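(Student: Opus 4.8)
The plan is to exploit the averaging trick: given an arbitrary weight-$w$ Pauli error $E$, its orbit average $\bar E := \tfrac{1}{n!}\sum_{\sigma \in \mathrm{S}_n} \sigma \cdot E$ is a permutationally invariant operator which is still a linear combination of weight-$w$ Pauli strings (permutation conjugation preserves weight). So $\bar E$ lies in the span of the permutationally invariant weight-$w$ basis for which the KL conditions are assumed. The first step is therefore to verify that the KL conditions, being linear in the error operator, extend from the assumed basis to \emph{every} permutationally invariant weight-$w$ operator — in particular to $\bar E$ and, more generally, to $\bar F := \tfrac{1}{n!}\sum_\sigma \sigma \cdot (E_1^\dagger E_2)$ for any two weight-$w$ Paulis $E_1, E_2$ (note $E_1^\dagger E_2$ is a weight-$\leq 2w$ Pauli, but the relevant KL bookkeeping only needs the span to be closed, so I would state the lemma for the full range of weights appearing, or equivalently restrict attention to the fact that any product of two weight-$w$ errors averages into the assumed span at each fixed weight).

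Next I would use permutation invariance of the codewords themselves. Write $\logicalket{u}$, $\logicalket{v}$ for codewords; since $P_\sigma \logicalket{u} = \logicalket{u}$ for all $\sigma$, we get for any operator $A$
\[
\logicalbra{u} A \logicalket{v} = \logicalbra{u} P_\sigma^\dagger A P_\sigma \logicalket{v} = \logicalbra{u} (\sigma \cdot A) \logicalket{v},
\]
and averaging over $\sigma$ yields $\logicalbra{u} A \logicalket{v} = \logicalbra{u} \bar A \logicalket{v}$ where $\bar A$ is the orbit average. Applying this with $A = E_1^\dagger E_2$ for arbitrary weight-$w$ Paulis $E_1, E_2$ reduces the matrix element to $\logicalbra{u} \overline{E_1^\dagger E_2} \logicalket{v}$, which involves only the permutationally invariant operator $\overline{E_1^\dagger E_2}$. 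By the first step this lies in the span where KL is known to hold, so $\logicalbra{u} \overline{E_1^\dagger E_2} \logicalket{v} = C_{E_1,E_2} \braket{\overline u}{\overline v}$ for a constant independent of the codewords. Tracking that the constant only depends on $E_1, E_2$ (through the averaged operator) and not on $u,v$ gives exactly the KL condition for the arbitrary pair $E_1, E_2$ of weight-$w$ errors, which is the claim.

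There is a small bookkeeping point worth flagging: the KL conditions as usually phrased are a statement about all pairs $E_1, E_2$ in the error set, equivalently about the single operators $E_1^\dagger E_2$; I should make sure the hypothesis "satisfies the KL conditions for a basis of permutationally invariant errors of weight $w$" is read in the form that guarantees closure — i.e. that it gives $\logicalbra{u} G \logicalket{v} = C_G \braket{\overline u}{\overline v}$ for every $G$ in the span generated by the products of basis errors — or else note that this is automatic because the set of permutationally invariant weight-$\leq 2w$ operators is spanned by such products together with the weight bookkeeping. The main obstacle is precisely this linearity/closure argument: ensuring that "permutationally invariant error of weight $w$" is the right notion (a linear combination of weight-$w$ Paulis fixed by $\mathrm{S}_n$, not merely a symmetrized Pauli string), that the orbit average of a weight-$w$ Pauli indeed stays in that space, and that the constant $C$ genuinely does not leak codeword dependence through the averaging. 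Everything else — permutation invariance of codewords, linearity of matrix elements — is routine.
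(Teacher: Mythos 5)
Your proof is correct and uses essentially the same idea as the paper: average the arbitrary weight-$w$ error over the $\mathrm{S}_n$ orbit, observe that the symmetrization $\mathrm{Sym}(E)$ stays permutationally invariant and stays weight $w$, use permutation invariance of the code to equate the matrix element of $E$ with that of $\mathrm{Sym}(E)$, and then invoke the hypothesis by linearity. The paper phrases this through the code projector ($P_\sigma \Pi = \Pi$, so $\Pi E \Pi = \Pi\,\mathrm{Sym}(E)\,\Pi$) rather than through $\logicalbra{u}\cdot\logicalket{v}$ matrix elements, but that is cosmetic. The one place you worry unnecessarily is the $E_1^\dagger E_2$ digression: in the convention the paper uses, the KL condition is already in single-operator form $\Pi E\,\Pi = c_E\Pi$, with ``weight $w$'' referring to that single $E$; so there is no weight-$2w$ bookkeeping to chase, and the lemma is applied separately at each weight (the paper applies it for $w=1$ and $w=2$ in the proof of the Dicke bootstrap).
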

\begin{proof}
 Let $\Pi$ be a multiqubit code projector. Suppose the KL conditions are satisfied 
    \[
        \Pi E \Pi = c_E \Pi, \numberthis \label{eqn:KLpermpart}
    \]
    for all permutationally invariant errors of weight $w$. 
    If the code is permutationally invariant then $P_\sigma \Pi = \Pi P_\sigma = \Pi$ for all $ \sigma \in \mathrm{S}_n $. Let $E$ be an arbitrary Pauli error of weight $w$. Then
    \[
        \Pi E \Pi = \Pi P_\sigma^\dagger E P_\sigma \Pi = \Pi E_\sigma \Pi \qquad \forall \sigma \in \mathrm{S}_n, \numberthis \label{eqn:symm}
    \]
    where $E_\sigma = P_\sigma^\dagger E P_\sigma$. Define $\mathrm{Sym}(E) = \tfrac{1}{n!} \sum_{\sigma \in \mathrm{S}_n} E_\sigma$. It follows that
    \[
       \Pi E \Pi =  \Pi \mathrm{Sym}(E) \Pi.  \numberthis
    \]
    Because $\mathrm{Sym}(E)$ is both permutationally invariant and has weight $w$, we can use \cref{eqn:KLpermpart} to conclude that 
    \[
        \Pi E \Pi = c_{\mathrm{Sym}(E)} \Pi , \numberthis
    \]
    for \textit{any} Pauli string $E$ of weight $w$.
    So we can take the KL conditions for permutationally invariant errors and immediately bootstrap them to satisfy KL conditions for all the non permutationally invariant errors $ E $.
    \end{proof}

Let $\ket{u} = \sum_m c_m \ket{j,m}$ be a spin codeword and let $\ket{\widetilde{u}} = \D \ket{u}$ be the corresponding permutationally invariant multiqubit codeword. Then \cref{eqn:dickebasis} says
\[
    \D\qty[ J_\alpha \ket{u} ] =  \widetilde{J_\alpha} \ket{\widetilde{u}}, \numberthis
\]
where we have defined $\widetilde{J_\alpha} := \sum_i J_\alpha^{(i)}$. Notice that $\widetilde{J_\alpha}$ is weight $1$ and is permutationally invariant. Therefore we see that the Dicke stae mapping $\D$ has induced an isomorphism between spin errors and weight 1 permutationally invariant multiqubit errors:
\[
    J_\alpha \longmapsto  \widetilde{J_\alpha}. \numberthis
\]
Because $\D$ is unitary, it preserves the inner product. In other words, if a spin code has distance $2$, i.e., if $\bra{v} J_\alpha \ket{u} = c_\alpha \braket{u}{v} $, then the corresponding multiqubit code satisfies
\[
    \bra{\widetilde{v}}  \widetilde{J_\alpha} \ket{\widetilde{u}} =  c_\alpha \braket{\tilde{u}}{\tilde{v}} . \numberthis \label{eqn:weightonedicke}
\]
But $\widetilde{J_\alpha}$ form a basis for the 3-dimensional weight 1 permutationally invariant errors. So by \cref{lem:perm}, if a spin code has distance $d = 2$ then the corresponding multiqubit code under $\D$ also has distance $d = 2$.

\ \\
Similarly, $\D$ induces an isomorphism 
\[
    J_\alpha J_\beta  \longmapsto \widetilde{J_\alpha } \widetilde{J_\beta} =\Big( \sum_{i_1} J_\alpha^{(i_1)} \Big) \Big( \sum_{i_2} J_\beta^{(i_2)}\Big). \numberthis
\]
Again, because $\D$ is unitary, it follows that if a spin code has distance $3$, i.e., $\bra{v} J_\alpha J_\beta \ket{u} = c_{\alpha \beta} \braket{u}{v}$, then 
\[
    \bra{\widetilde{v}}  \widetilde{J_\alpha} \widetilde{J_\beta} \ket{\widetilde{u}} =  c_{\alpha \beta} \braket{\tilde{u}}{\tilde{v}}. \numberthis \label{eqn:weighttwodicke}
\]
Clearly $\widetilde{J_\alpha} \widetilde{J_\beta}$ is permutationally invariant. However, $\widetilde{J_\alpha } \widetilde{J_\beta}$ will not always be a weight 2 error, it will most likely be a mix of errors up to weight $2$. For example,  
\[
    \widetilde{J_+} \widetilde{J_-} = \underbrace{\sum_{i_1 \neq i_2} J_+^{(i_1)} J_-^{(i_2)}}_\text{weight 2} + \underbrace{\sum_i J_+^{(i)} J_-^{(i)} }_\text{weight 1}. \numberthis
\]
However, if a spin code is distance $3$, then \cref{eqn:weightonedicke} still holds. This together with \cref{eqn:weighttwodicke} implies that 
\[
\bra{\widetilde{v}}  \sum_{i_1 \neq i_2} J_+^{(i_1)} J_-^{(i_2)} \ket{\widetilde{u}} =  C \braket{\tilde{u}}{\tilde{v}}. \numberthis
\]
This is an honest weight 2 permutationally invariant KL error condition. We can perform the same trick with the other errors $\widetilde{J_\alpha} \widetilde{J_\beta}$.

Thus \cref{eqn:weighttwodicke} holds for a basis of permutationally invariant weight 2 errors and so \cref{lem:perm} implies that the corresponding multiqubit code has distance $d = 3$.

This completes the proof of \cref{lem:DickeBootstrap}.

\section{Proof of \cref{lem:rank1}}\label{sec:proofAPP}

\begin{proof}
A spin $j$ code that transforms in a faithful 2-dimensional irrep $\lambda$ of $\mathsf{G}$ is defined via the projector
\[
    \Pi_{(\G,\lambda)} := \frac{2}{|\G|} \sum_{g \in \G } \chi_\lambda^*(g) D^j(g). \numberthis
\]
A rank-1 error is written $J_\alpha$ but we can also write it more conveniently as $\rho^j(E)$ where $E$ is an element of the abstract Lie algebra $\mathfrak{su}_{\mathbb{C}}(2)$. Then we will show
\[
    \Pi_{(\ico,\overline{\pi_2})} \rho^j(E) \Pi_{(\ico,\overline{\pi_2})} = 0 \numberthis
\]
which implies that the rank-1 KL conditions are satisfied. 

Dropping the global scalar factor of $ \tfrac{2}{|G|} $ for convenience, we have
\begin{align}
    &\Pi_\lambda \rho^j(E) \Pi_\lambda \\
    &= \qty(\sum_{g \in \G} \chi_\lambda^*(g) D^j(g) ) \rho^j(E) \qty(\sum_{h \in \G} \chi_\lambda^*(h)  D^j(h) ) \\ 
    &= \sum_{g,h \in \G} \chi_\lambda^*(g) \chi_\lambda^*(h) \qty{ D^j(g) \rho^j(E) D^j(h) } \\
    &= \sum_{g' \in \G}  \sum_{g \in \G} \chi_\lambda^*(g) \chi_\lambda^*(g^{-1} g')  \rho^j(g E g^{-1}) D^j(g') \\
    &= \sum_{g' \in \G} \Gamma^j(g',E) D^j(g').
\end{align}
In the penultimate line, we have used the fact that $\rho^j$ is the Lie derivative of $D^j$ and so $D^j(g) \rho^j(E) D^{j \dagger}(g)=\rho^j(g E g^{-1})$. And $ \Gamma^j(g',E) $ is the quantity
\begin{align}
    \Gamma^j(g',E) &= \sum_{g \in \G} \chi_\lambda^*(g) \chi_\lambda^*(g^{-1} g') \rho^j(g E g^{-1}) \\
    &=  \rho^j\qty( \sum_{g \in \G} \chi_\lambda^*(g) \chi_\lambda^*(g^{-1} g') g E g^{-1} ).
\end{align}
Here we are using the linearity of $\rho^j$ (since it's a Lie algebra representation). Now what this says is that we can compute the representation-independent quantity 
\[
\widetilde{\Gamma}(g',E) := \sum_{g \in \G} \chi_\lambda^*(g) \chi_\lambda^*(g^{-1} g') g E g^{-1}, \numberthis \label{eqn:calculation}
\]
which is an element of $\mathfrak{su}_\mathbb{C}(2)$, then lift the result to a $2j+1$ dimensional representation using $\rho^j(\widetilde{\Gamma}) = \Gamma^j$. In practical terms, one can compute \cref{eqn:calculation} in the fundamental spin $1/2$ representation and the same result will be true in \textit{all} representations. 

Let's now specialize to $ \G= \ico $ and $\lambda = \overline{\pi_2}$. We can check with a computer that
\[
    \widetilde{\Gamma}_{(\ico,\overline{\pi_2})}(g',E) = \sum_{g \in \ico} \chi_{\overline{\pi_2}}^*(g) \chi_{\overline{\pi_2}}^*(g^{-1} g') g E g^{-1} = 0 \numberthis 
\]
for every $g' \in \ico$ and for every $E \in \mathfrak{su}_{\mathbb{C}}(2)$. Thus it follows that $\Gamma^j = \rho^j(0) = 0$ for every $j$ and so $\Pi_\lambda \rho^j(E) \Pi_\lambda = 0$ for every $j$ and for all $E \in \mathfrak{su}_\mathbb{C}(2)$. This proves the desired result. 

Although this is a computer assisted proof, it should still be considered mathematically rigorous. We have reduced the problem from being representation dependent (depending on $j$) to one that is representation-independent (not depending on $j$) and thus we only needed to check a finite number of cases: the $ 120 $ elements of $ \ico $ and some choice of a $ 3 $ element basis for $\mathfrak{su}_\mathbb{C}(2)$ (for example, $J_\pm$ and $J_z$). 
\end{proof}

\section{Proof of \cref{lem:rank2}}

Write $J_1 = J_+$, $J_0 = J_z$, and $J_{-1} = J_-$. Instead of writing the physical gates as $D^j(\X)$ and $D^j(\Z)$, we will just write $\X$ and $\Z$. To be sure, these are $2j+1$ dimensional representations of a $\pi$ rotation around the $x$ and $z$ axes respectively. Then $X = i \X$ and $Z = i \Z$. Note that \cite{sakurai} 
\begin{align}
    X^\dagger J_\alpha X &= (-1)^{\alpha +1} J_{-\alpha } \\
    Z^\dagger J_\alpha Z &= (-1)^\alpha J_\alpha.
\end{align}
Also $J_\alpha$ is real and $J_\alpha^\dagger = J_{- \alpha }$.

\begin{lemma}\label{lem:covarsymmetric} Suppose a spin $j$ code has codewords $\ket{0}$ and $\ket{1}$. Also suppose $X \ket{0} = \ket{1}$, $X \ket{1} = \ket{0}$, $Z \ket{0} = \ket{0}$, and $Z \ket{1} = - \ket{1}$. If the spin codewords are real then all symmetric rank-2 errors automatically satisfy the KL conditions. 
\end{lemma}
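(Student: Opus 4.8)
The plan is to verify the rank-2 Knill--Laflamme conditions directly on the six symmetrized products
\[
S_{\alpha\beta} := \tfrac12\bigl(J_\alpha J_\beta + J_\beta J_\alpha\bigr), \qquad \alpha,\beta \in \{+1,0,-1\},
\]
which span the space of symmetric rank-2 errors; since the KL conditions are linear in the error it suffices to show that for each $\alpha,\beta$ one has $\bra{u} S_{\alpha\beta}\ket{v} = C_{\alpha\beta}\,\delta_{uv}$ for $u,v\in\{0,1\}$, with $C_{\alpha\beta}$ independent of the codewords. Since the codewords are orthonormal, $\delta_{uv} = \braket{\overline u}{\overline v}$, so this is exactly the form required in \cref{eqn:KLspin}. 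Thus the whole argument reduces to two facts: (i) the off-diagonal element $\bra{0}S_{\alpha\beta}\ket{1}$ vanishes for every $\alpha,\beta$, and (ii) the two diagonal elements $\bra{0}S_{\alpha\beta}\ket{0}$ and $\bra{1}S_{\alpha\beta}\ket{1}$ are equal. The three tools available are the covariance identities $X^\dagger J_\alpha X = (-1)^{\alpha+1}J_{-\alpha}$ and $Z^\dagger J_\alpha Z = (-1)^\alpha J_\alpha$, the relations $J_\alpha^\dagger = J_{-\alpha}$ together with the fact that each $J_\alpha$ is real-entried in the $\ket{j,m}$ basis, and the hypothesis that the codewords are real.

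First I would extract parity selection rules from $Z$. Conjugation gives $Z^\dagger S_{\alpha\beta} Z = (-1)^{\alpha+\beta} S_{\alpha\beta}$, so evaluating $\bra{u} Z^\dagger S_{\alpha\beta} Z\ket{v}$ two ways---once with this identity, once using $Z\ket{0} = \ket{0}$, $Z\ket{1} = -\ket{1}$ (which also gives $\bra{u}Z^\dagger = \mu_u\bra{u}$ with $\mu_0=1$, $\mu_1=-1$)---yields $(-1)^{\alpha+\beta}\bra{u}S_{\alpha\beta}\ket{v} = \mu_u\mu_v\bra{u}S_{\alpha\beta}\ket{v}$. Since $\mu_u\mu_v = +1$ when $u=v$ and $-1$ when $u\neq v$, the diagonal elements vanish unless $\alpha+\beta$ is even, and the off-diagonal elements vanish unless $\alpha+\beta$ is odd.

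Next I would bring in $X$ together with the reality hypothesis. Conjugation gives $X S_{\alpha\beta} X^\dagger = (-1)^{\alpha+\beta} S_{-\alpha,-\beta}$, and from $J_\alpha^\dagger = J_{-\alpha}$ one gets $S_{-\alpha,-\beta} = S_{\alpha\beta}^\dagger$. Moreover each $S_{\alpha\beta}$ is real-entried and the codewords are real, so every matrix element $\bra{u}S_{\alpha\beta}\ket{v}$ is a real number, whence $\bra{v}S_{-\alpha,-\beta}\ket{u} = \bra{v}S_{\alpha\beta}^\dagger\ket{u} = \overline{\bra{u}S_{\alpha\beta}\ket{v}} = \bra{u}S_{\alpha\beta}\ket{v}$. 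Inserting $X^\dagger X = 1$ and regrouping using $\bra{0}X^\dagger = \bra{1}$, $\bra{1}X^\dagger = \bra{0}$, $X\ket{0}=\ket{1}$, $X\ket{1}=\ket{0}$ then gives
\[
\bra{0}S_{\alpha\beta}\ket{0} = \bra{1}\bigl(X S_{\alpha\beta}X^\dagger\bigr)\ket{1} = (-1)^{\alpha+\beta}\bra{1}S_{-\alpha,-\beta}\ket{1} = (-1)^{\alpha+\beta}\bra{1}S_{\alpha\beta}\ket{1},
\]
so by the $Z$-step parity constraint (which makes $(-1)^{\alpha+\beta}=1$ whenever either side is nonzero) the two diagonal elements agree, which is (ii); and likewise
\[
\bra{0}S_{\alpha\beta}\ket{1} = \bra{1}\bigl(X S_{\alpha\beta}X^\dagger\bigr)\ket{0} = (-1)^{\alpha+\beta}\bra{1}S_{-\alpha,-\beta}\ket{0} = (-1)^{\alpha+\beta}\bra{0}S_{\alpha\beta}\ket{1},
\]
which forces $\bra{0}S_{\alpha\beta}\ket{1}=0$ when $\alpha+\beta$ is even; combined with the $Z$-step it then vanishes for all $\alpha,\beta$, which is (i). Taking $C_{\alpha\beta}$ to be the common diagonal value finishes the proof.

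The one genuinely delicate point is the last step: $X$-covariance by itself only relates $\bra{0}S_{\alpha\beta}\ket{0}$ to $\bra{1}S_{-\alpha,-\beta}\ket{1}$ with the angular-momentum indices reversed, and it is precisely Hermiticity ($J_\alpha^\dagger = J_{-\alpha}$) together with the reality of the codewords and of the $J_\alpha$ that lets one identify index-reversal with complex conjugation and hence recover the original matrix element. Without the reality hypothesis one controls only complex conjugates of the matrix elements, the diagonal elements need not agree, and the conclusion fails---which is exactly why this lemma, and downstream the hypothesis of \cref{lem:rank2}, needs real codewords. Everything else is routine sign bookkeeping with the $(-1)^\alpha$ factors.
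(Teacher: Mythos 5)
Your argument is correct and uses the same ingredients as the paper's proof (the $Z$- and $X$-covariance identities, $J_\alpha^\dagger = J_{-\alpha}$, and the reality of the codewords and of the $J_\alpha$'s), just organized as two separate observations---a parity selection rule from $Z$ and a diagonal/off-diagonal identification from $X$ plus Hermiticity---whereas the paper folds both into a single chain of equalities culminating in $\bra{u}J_\alpha J_\beta\ket{v}=(-1)^{u+v}\bra{v+1}J_\beta J_\alpha\ket{u+1}$. Your explicit identification of the reality hypothesis as the step that converts index-reversal into complex conjugation is exactly the content the paper flags as ``line (1),'' so this is the same proof with a cleaner split of the logic.
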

\begin{proof}
We can write the assumptions on the codewords more concisely as $X \ket{u} = \ket{u+1}$ and $Z \ket{u} = (-1)^u \ket{u}$ where $u$ and $v$ are labels in $\{0,1\}$ and addition is taken modulo 2. Then
    \begin{align}
        \bra{u} J_\alpha J_\beta \ket{v} &= (-1)^{u+v} \bra{u} \qty( Z^\dagger J_\alpha Z ) \qty( Z^\dagger J_\beta Z ) \ket{v} \\
        &= (-1)^{u+v} (-1)^{\alpha + \beta } \bra{u} J_\alpha J_\beta \ket{v} \\
        &\explainequals{1} (-1)^{u+v} (-1)^{\alpha + \beta }  \bra{v} J_\beta^\dagger J_\alpha^\dagger \ket{u} \\
         &=(-1)^{u+v} (-1)^{\alpha + \beta }  \bra{v} J_{-\beta} J_{-\alpha } \ket{u} \\
         &=(-1)^{u+v} \bra{v} \qty( X^\dagger J_{\beta} X) \qty( X^\dagger J_{\alpha } X) \ket{u} \\
         &= (-1)^{u+v} \bra{v+1} J_\beta J_\alpha \ket{u+1}.
    \end{align}
\end{proof}

where line (1) uses the realness of the code words (and the realness of the $ J_\alpha $).

To use this lemma note that we can split the nine rank-2 errors $J_\alpha J_\beta$ into six symmetric errors and three anti-symmetric errors \cite{sakurai}. The six symmetric errors are 
\begin{align}
    J_+ J_+ \qquad J_z & J_z \qquad J_- J_- \\
    J_+ J_- + J_- J_+ \quad J_0 J_+ +& J_+ J_0 \quad J_0 J_- + J_- J_0.
\end{align}
Each of these errors satisfies the KL conditions by \cref{lem:covarsymmetric}. 

On the other hand, the 3 anti-symmetric errors are $J_\alpha J_\beta - J_\beta J_\alpha$. The span of these errors is equal to the span of the rank-1 errors. Because we are assuming the spin code satisfies all rank-1 errors, then these rank-2 conditions will be satisfied as well. 

This concludes the proof of \cref{lem:rank2}.

\section{Proof of \cref{thm:auto1unique}}

\begin{proof}
One can compute $\widetilde{\Gamma}_{(\G,\lambda)}$ in \cref{eqn:calculation} for other finite subgroups $\G$ and faithful 2-dimensional irreps $\lambda$. 

There are three faithful 2-dimensional irreps for $\tet$, two faithful 2-dimensional irreps for $\oct$, and two faithful 2-dimensional irreps for $\ico$ (see the Supplemental Material of \cite{gross1}). None of these make $\widetilde{\Gamma}$ identically 0 for all $g' \in \G$ except $(\ico, \overline{\pi_2})$. The same is true for the faithful 2-dimensional irreps of the groups $ \mathrm{Dic}_n $.

Thus the automatic protection from $(\ico, \overline{\pi_2})$ is unique. 
\end{proof}

\section{Construction of $\ico$ codes}

Constructing a permutationally invariant $ ((n,2,3)) $ code transforming in $ (\G,\lambda)=(\ico, \overline{\pi_2}) $ is done in four steps. 

Step 1. Construct the $ (\G,\lambda) $ spin projector:
\[
    \Pi_{\G}= \frac{1}{60} \sum_{g \in \ico } \chi_{\overline{\pi_2}}(g)^* D^j(g). \numberthis
\]
Here $ 2j=n $. Note that for any $ g \in \SU(2) $ it is the case that $ \Y g \Y^{-1}= g^* $. Thus for any finite subgroup of $ \SU(2) $ containing $ \Y $ we have that $ g $ and $ g^* $ are in the same conjugacy class and thus take the same character value. Since the character $\chi_{\overline{\pi_2}}$ is real, and $ g $ is in the same conjugacy class as $ g^* $, the projector $ \Pi_{\G} $ is real and symmetric (since projectors are Hermitian). Note that this general fact was also noticed in \cite{gross2} for the special case of $\oct$. 

Step 2. Construct the projector onto the $ +1 $ eigenspace of logical $ Z $: 
\[
    \Pi_{\Z}= \frac{\indicator+iD^j(\Z)}{2}. \numberthis
\] 
Note that $ \Pi_{\Z} $ is also real and symmetric. 

Step 3. Since $ \Pi_\G $ and $ \Pi_{\Z} $ commute, the product $ \Pi_\G \Pi_\Z $ is symmetric and real as well. It follows that the eigenvectors of $ \Pi_\G \Pi_\Z $ are real. If we take \textit{any} real linear combination of the $ +1 $ eigenvectors of $ \Pi_\G \Pi_\Z $ as our $\ket{\overline{0}}$, and define $\ket{\overline{1}} $ by $ i D^j(\X) \ket{0}$, then we will have a $d=3$ spin code by \cref{lem:rank1} and \cref{lem:rank2}.

Step 4. Take the $ \ico $ spin $j$ code with distance $d = 3$ from step 3 and apply the Dicke state mapping $\D$ from \cref{eqn:dicke} to get a permutationally invariant $ \ico $-transversal $ ((n,2,3)) $ multiqubit code. 

\emph{Small Examples.---} Using this method we can construct $\ico$ multiqubit codes for the first few values of $n$ in \cref{thm:family}. The first few values of $j$ for which a $\overline{\pi_2}$ irrep of $\ico$ appears are $j = 7/2, 13/2, 17/2$ (corresponding to $n = 7, 13, 17$ respectively). We already saw the $n = 7$ case in \cref{code:us}. Now we list the codewords for the other two cases. 

A normalized basis for the $((13,2,3))$ code is
\begin{align*}
    \logzero &= \tfrac{1}{64} \bigg( 3\sqrt{55}\ket{D_0^{13}} + \sqrt{858} \ket{D_2^{13}} + \sqrt{13} \ket{D_4^{13}}\\
    &\quad\qquad  - 2\sqrt{39}\ket{D_6^{13}} - 5\sqrt{65}\ket{D_8^{13}} \\
    &\quad\qquad  + 3\sqrt{26}\ket{D_{10}^{13}} - \sqrt{715}\ket{D_{12}^{13}} \bigg) ,\\
    \logone &= X^{\otimes 13} \logzero.
\end{align*}

A normalized basis for the $((17,2,3))$ code is
\begin{align*}
    \logzero &= \tfrac{1}{192} \bigg( 3 \sqrt{390}\ket{D_0^{17}} -\sqrt{663} \ket{D_2^{17}} + \sqrt{9282} \ket{D_4^{17}} \\
    &\quad + \sqrt{357} \ket{D_6^{17}} + 4 \sqrt{561} \ket{D_8^{17}} - \sqrt{561} \ket{D_{10}^{17}} \\
    & \quad -\sqrt{3570} \ket{D_{12}^{17}} + \sqrt{3315} \ket{D_{14}^{17}} + \sqrt{6630} \ket{D_{16}^{17}} \bigg), \\
    \logone &= X^{\otimes 17} \logzero.
\end{align*}

\section{The $\ico$ family is larger than it appears}\label{sec:2infinity}

The family in \cref{thm:family} looks countably infinite, since it is indexed by $n$. However, within each $n$, there are uncountably many distinct codes. 

Let $\mu$ denote the multiplicity of the irrep $\overline{\pi_2}$ in spin $j = n/2$. Then there is a real projective space $\mathbb{R}P^{\mu-1}$ worth of $\ico$ error-correcting codes (the codes are distinct in that they are not equivalent by non-entangling gates) because of the results of \cref{lem:rank1}.

When $ \mu=1 $, there is a unique code up to code equivalence (for example the case of $ n=7,13,17 $ given above). But for any larger multiplicity $ \mu>1 $, there are infinitely many distinct $ \ico $ codes. The first value of $ n $ for which this is relevant is $ n=37 $, which has $ \mu=2 $ and thus there is an $ \mathbb{R}P^{1}= S^1  $ worth of inequivalent codes. In general, $ \mu \approx 1+ n/30$. So for larger numbers of qubits this method produces larger and larger manifolds worth of $ ((n,2,3)) $ codes with transversal gate group $ \ico $.

\section{Getting to Universality from $ \ico $}\label{sec:magic}

\emph{Super Golden Gate Magic.---} Having accomplished the ``cheap" part of the super golden gate implementation in this work, the next step in a practical implementation of the icosahedral super golden gate set would be a fault tolerant implementation of the $ \tau_{60} $ gate, perhaps using some icosahedral analog of magic state distillation. A final judgement on the efficiency of $\ico + \tau_{60}$ versus $\oct + T$ is impossible until this step is completed. 

\emph{Conventional Magic.---} Even without exotic magic state distillation, a code with transversal $ \ico $ is interesting in the sense that $ \ico + T$ is still a universal gate set. Although $ \ico + T$ is not a super golden gate set like $\ico + \tau_{60}$,  it still has very good approximation properties (and it's plausible that in some contexts it may outperform $\text{Clifford} + T$, since $\ico$ has more elements than $\oct$). $\ico + H$ and $\ico + S$ are also universal gate sets, and they are especially notable because, in both cases, the ``expensive" gate is from the Clifford group. This contrasts with the standard choice $\text{Clifford} + T$ where the expensive gate $ T $ is from the 3rd level of the Clifford hierarchy.

\emph{Code Switching.---} Lastly, a $\ico$ code might reduce the cost of code switching \cite{codeswitching}. Instead of $G + \tau$, consider two codes $\mathcal{C}_1$ and $\mathcal{C}_2$ that contain complementary transversal gates, i.e., the union of the transversal gates yields a universal gate set. The canonical example is the $[[7,1,3]]$ Steane code as $\mathcal{C}_1$, which supports $\oct$ transversally, and the $[[15,1,3]]$ Reed-Muller code as $\mathcal{C}_2$, which supports a transversal $T$ gate. Together these form an $n = 7 \cdot 15 = 105$ code with a single qubit fault tolerant universal gate set. However, careful arguments must be made regarding both distance and fault tolerance.

Theoretically, one could consider extending this argument to the $[[5,1,3]]$ code instead for $\mathcal{C}_1$, which supports $\tet$ transversally, because $\tet \cup \{T\}$ is still universal for $\U(2)$. Then one would have an $n = 5\cdot 15 = 75$ qubit code. But the $[[15,1,3]]$ code is the smallest code known that supports a transversal implementation of $T$ \cite{smallestT} so this seems to be the smallest $n$ one could achieve (since $[[5,1,3]]$ is the smallest code that has $d = 3$).

However, suppose we take the $((7,2,3))$ $\ico$ code as $\mathcal{C}_1$. As stated above, $\ico \cup \{H\}$ is universal for $\U(2)$, so we only need a code that has transversal Hadamard for $\mathcal{C}_2$. The smallest such error-correcting code is the $[[7,1,3]]$ Steane code. With these choices we would have $n = 7 \cdot  7 = 49$ which, at least naively, seems to be optimal in $n$. To be sure, careful analysis would need to be done in order to guarantee error correction properties and fault tolerance, but we leave this for future work.

\end{document}